\newtheorem{theorem}{Theorem}
\newtheorem{lemma}{Lemma}
\newtheorem{corollary}{Corollary}
\newtheorem{proposition}{Proposition}
\newtheorem{definition}{Definition}
\begin{document}

%\title{Learning Channel Codes for Short-Packet Communications}

%\title{Fundamental Limits of Data-driven Channel Codes in the Finite Blocklength Regime}
\title{Learning Channel Codes from Data: Performance Guarantees in the Finite Blocklength Regime}

%%%%%%
\author{%
  \IEEEauthorblockN{Neil Irwin Bernardo\IEEEauthorrefmark{1}\IEEEauthorrefmark{2}, Jingge Zhu\IEEEauthorrefmark{1}, and Jamie Evans\IEEEauthorrefmark{1}}
  \IEEEauthorblockA{\IEEEauthorrefmark{1}Department of Electrical and Electronic Engineering, University of Melbourne, Melbourne, Australia\\
  \IEEEauthorrefmark{2}Electrical and Electronics Engineering Institute, University of the Philippines Diliman, Quezon City, Philippines }
  \IEEEauthorblockA{
      Email: neil.bernardo@eee.upd.edu.ph, \{jingge.zhu,  jse\}@unimelb.edu.au
   }
   
  % \IEEEauthorblockA{%
  %   Please do NOT provide authors' names and affiliations\\
  %   in the paper submitted for review, but keep this placeholder.\\
  %   ISIT23 follows a \textbf{double-blind reviewing policy}.}
}

\maketitle

%%%%%
%% Abstract: 
%% If your paper is eligible for the student paper award, please add
%% the comment "THIS PAPER IS ELIGIBLE FOR THE STUDENT PAPER
%% AWARD." as a first line in the abstract. 
%% For the final version of the accepted paper, please do not forget
%% to remove this comment!
%%
\begin{abstract}
  This paper examines the maximum code rate achievable by a data-driven communication system over some unknown discrete memoryless channel in the finite blocklength regime. A class of channel codes, called \emph{learning-based channel codes}, is first introduced. Learning-based channel codes include a learning algorithm to transform the training data into a pair of encoding and decoding functions that satisfy some statistical reliability constraint. Data-dependent achievability and converse bounds in the non-asymptotic regime are established for this class of channel codes. It is shown analytically that the asymptotic expansion of the bounds for the maximum achievable code rate of the learning-based channel codes are tight for sufficiently large training data.
  
  %Coding theorems for learning-based channel codes are then established.
  
  %maximum code rate achievable by a data-driven communication system over a discrete channel. 
  
  % (Remove this comment for final version of paper.)
  % Instructions are given for the preparation and submission of papers
  % for the \emph{2023 International Symposium on Information
  %   Theory}. This template is available (including \LaTeX{}-source)
  % from \url{https://isit2023.org/}
\end{abstract}

\section{Introduction}
\label{section:introduction}
The emergence of new application scenarios in the 5$^{\mathrm{th}}$  generation (5G) of cellular technology, namely massive machine-type communication (mMTC) and ultra-reliable low-latency communication (uRLLC), has sparked interest in the fundamental tension between blocklength and reliability \cite{Durisi:2016}. The framework developed by Polyanskiy, Poor, and Verdu \cite{Polyanskiy:2010} is perhaps the most widely adopted approach to obtain tight bounds for the maximum achievable rate of a channel in the finite blocklength regime. This framework, however, requires \emph{a priori} knowledge of the communication channel.

Dropping the assumption of a perfectly known channel, some papers have studied the synergy between channel coding and channel estimation in different fading environments using the mismatched decoding framework \cite{Liva:2017,Ostman:2019}. Under the mismatched decoding framework, the instantaneous channel state information (CSI) is first estimated and the result is used to aid the decoding operation. The performance of the mismatched decoder is then evaluated with respect to the true fading distribution. Hence, the underlying channel distribution should be known to facilitate the analysis. Similarly, achievability results for the mismatched decoding capacity of discrete memoryless channels in the finite blocklength regime \cite{Scarlett:2014} require knowledge of the underlying distribution to numerically evaluate the error exponents and relevant information-theoretic quantities. Such approach may not be suitable for a \emph{model-deficit} communication problem, i.e. aside from the fact that the sender and receiver do not know the probabilistic relationship between the input and output of the channel, the information about the true channel is also not available in the analysis.

%the sender and receiver do not know the probabilistic relationship between the input and output of the channel \textcolor{black}{and such information is also not available in the analysis}.

One method to circumvent the lack of channel model is to adopt a data-driven approach, i.e. machine learning and deep learning techniques are used on a finite collection of channel input-output pairs to either learn specific blocks of a communication system or learn the end-to-end communication process. There is a vast literature on the application of machine learning methods and deep neural networks (DNN) to various communication problems \cite{Simeone:2018,Cammerer:2020,Bjornson:2020,Shlezinger:2021,Shen:2022}. Deep learning-based channel codes have also been developed in recent years but are only limited to short to moderate blocklengths \cite{Letizia:2021,Jiang:2019,Jamali:2022, Hebbar:2022}. Such black box approaches to communication system design require theoretical justification on the generalization properties of several advanced machine learning methods and DNN. Very few works have applied statistical learning tools to analyze the error rate performance and generalization properties of learning-based communication systems \cite{Angjelichinoski:2019,Weinberger:2022,Tsvieli:2022}.

In this paper, we investigate the fundamental limits of data-driven communication in the finite blocklength regime. We introduce the notion of \emph{learning-based channel codes} (see Definition \ref{definition:learnable_code}) and establish non-asymptotic performance bounds for this class of channel codes. These bounds are data-dependent and are proved by merging mathematical tools from finite blocklength information theory and statistical learning theory. The main contributions of this work are summarized as follows:
\begin{itemize}
    \item \textbf{Achievability:} We establish a data-dependent upper bound on the lowest achievable error probability of learning-based channel codes (see Theorem \ref{theorem:RCU_with_learning}).
    \item \textbf{Converse:} We prove a data-dependent upper bound on the codebook size that every learning-based channel code must satisfy (see Theorem \ref{theorem:metaconverse_learning}).
    \item \textbf{Asymptotic Expansion:} Under some conditions on the size of the training data, we characterize the behavior of the maximum achievable rate for sufficiently large blocklengths (see Theorem  \ref{theorem:normal_approx_achievability}). The upper and lower bounds established for the maximum achievable rate converge when the size of the training data is sufficiently large.
    %\item We establish an upper bound on the error probability (achievability) and a converse bound on the codebook size of learning-based channel codes.
    %\item Under some condition on the size of the training data, we show
\end{itemize}

The rest of the paper is organized as follows: Section \ref{section:problem_formulation} formulates the problem setup and states the model assumptions. Section \ref{section:main_result} presents the non-asymptotic performance bounds of the learning-based channel codes. Section \ref{section:normal_approx} provides the asymptotic expansion of the bounds in Section \ref{section:main_result} to characterize the maximum achievable code rate of the learning-based channel codes. Finally, Section \ref{section:summary} concludes the paper. 
The reader is referred to the supplement file for the proofs of the main results.

\begin{figure*}[t]
    \centering
    \includegraphics[scale =1]{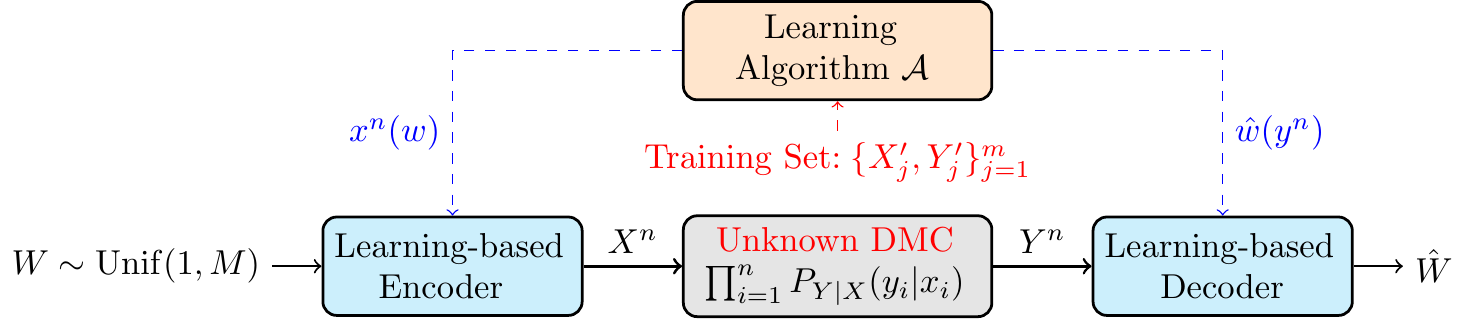}
    \caption{System model of the learning-based communication system.}
    \label{fig:learning2comm_problem_setup}
\end{figure*}

\emph{Notation:} All $\log(\cdot)$ terms in this paper are in base 2 unless specified otherwise. We use $\ln(\cdot)$ for the natural logarithm. When it is clear from the context, we use $P_{X}$ for the marginal distribution $P_{X}(x)$. We denote an $n$-fold product distribution of $P_{X}$ as $P_{X}^{n}$. \textcolor{black}{For the joint distribution of a length-$n$ sequence of random variables $X^n = (X_1,\cdots, X_n)$ (not necessarily a product distribution), we use $P_{X^n}$.} If $P$ is a probability measure and $E$ is an event, then $P[E] = \int_{E}dP$ is the probability of event $E$ under measure $P$. We denote the expectation operator by $\mathbb{E}[\cdot]$ and the variance operator by $\mathrm{Var}[\cdot]$. Equivalent notations can be used for the conditional and joint distributions. The function $\mathbbm{1}_{\{E\}}$ is the indicator function, which outputs 1 if $E$ is true and 0 otherwise. We use standard asymptotic notation in this paper: Suppose $f(n)$ and $g(n)$ are functions. We have $f(n) = \mathcal{O}(g(n))$ if there exists a constant $K$ such that $\lim_{n\rightarrow\infty}f(n)/g(n) < K$. Finally, let $f$ and $g$ be two quantities. We use the relationship $f \gtrsim g$ if there exists a constant $ K \geq 1$ such that $f \geq K\cdot g$.

\section{Problem Formulation}
\label{section:problem_formulation}
% \begin{figure}[t]
%     \centering
%     \includegraphics[scale =.61]{Figures/problem_setup.pdf}
%     \caption{System model of the learning-based communication system.}
%     \label{fig:learning2comm_problem_setup}
% \end{figure}

We consider the point-to-point communication channel depicted in Figure \ref{fig:learning2comm_problem_setup}. The sender wishes to communicate a message $W$ to a receiver over a discrete memoryless channel (DMC) using $n$ channel uses. The message $W$ is drawn according to a uniform distribution over an index set $\{1,2,\cdots,M\}$. The DMC is the tuple $(\mathcal{X},P_{Y|X}(y|x),\mathcal{Y})$, where $\mathcal{X}$ is the input alphabet, $\mathcal{Y}$ is the output alphabet, and $P_{Y|X}(y|x)$ is the conditional probability mass function (pmf) representing the channel transition probabilities. We assume that the sizes of the input and output alphabets are finite, i.e. $|\mathcal{X}| < \infty$ and $|\mathcal{Y}| < \infty$. There is also a codebook $\mathcal{C}_{M, n}$ that contains $M$ length-$n$ codewords $\{c_1,\cdots,c_M\}$. This codebook is known by the encoder and decoder \emph{a priori}. The codebook $\mathcal{C}_{M, n}$ can be thought of as an $M\times n$ matrix where the $m$-th row contains the $m$-th codeword.

The channel transition probabilities of the DMC are not known to the sender and receiver. Instead, the sender and receiver are given a training set $\mathcal{D}_{m}$ which consists of $m$ channel input-output pairs $\{x_j',y_j'\}_{j = 1}^{m}$. Each channel input-output pair is drawn independently according to an unknown generating distribution $P_{Y|X}(y|x)\cdot (1/|\mathcal{X}|)$. Using the training set $\mathcal{D}_{m}$, the sender and receiver learn an encoding function $x^n(w)$ and a decoding function $\hat{w}(y^n)$, respectively, to communicate over the unknown channel. We emphasize that this is different from the problem setup of a compound channel (c.f. \cite[Chapter 7.2]{NetworkInfoTheory:2011}). The compound channel problem assumes that the channel law belongs to a set of conditional pmfs parameterized by a state $S\in\mathcal{S}$ and is fixed throughout transmission. This prior information is not available in our problem setup. In exchange for not knowing the set of conditional pmfs, a training set $\mathcal{D}_m$ is provided.

%We emphasize that the problem setup in this paper is different from that of the compound channel (c.f. \cite[Chapter 7.2]{NetworkInfoTheory:2011}). The latter assumes that the channel law belongs to a set of conditional pmfs and the unknown channel law is fixed throughout transmission, whereas the former has no prior assumption about the conditional pmf $P_{Y|X}$. In exchange for not knowing the set of conditional pmfs, the problem setup in this paper is provided with $\mathcal{D}_m$.

%We emphasize the difference between the problem setup in this paper and that of the compound channel (c.f. \cite[Chapter 7.2]{NetworkInfoTheory:2011}). The former has no prior assumption about the transition probabilities of the DMC whereas the latter assumes that the channel law belongs to a set of conditional pmfs and is fixed throughout transmission. In exchange for not knowing the set of conditional pmfs, our problem setup is provided with a training set $\mathcal{D}_m$ to learn an encoding and decoding function.

In the following, we formally define an $(M,n,\epsilon,\delta)$ learning-based channel code for a training set $\mathcal{D}_m$.

%DMC with input alphabet $\mathcal{X}$ and output alphabet $\mathcal{Y}$.

\begin{definition}\label{definition:learnable_code}
An $(M,n,\epsilon,\delta)$ learning-based channel code for a training set $\mathcal{D}_m  = \{x'_j,y_j'\}_{j = 1}^m$, where $(x_j',y_j')\sim P_{Y|X}(y|x)\frac{1}{|\mathcal{X}|}\;\forall (x,y)\in \mathcal{X}\times\mathcal{Y}$ consists of:
%a DMC with input alphabet $\mathcal{X}$ and output alphabet $\mathcal{Y}$ consists of:
\begin{enumerate}
\item a message $W$ uniform over $[1:M]$,
%\item a size-$m$ training set $\mathcal{D}_m  = \{x'_j,y_j'\}_{j = 1}^m$, where $(x_j',y_j')\sim P_{Y|X}(y|x)\frac{1}{|\mathcal{X}|}\;\forall (x,y)\in \mathcal{X}\times\mathcal{Y}$
\item a learning algorithm $\mathcal{A}: \mathcal{D}_{m} \mapsto (x^n(w),\hat{w}(y^n))$ that takes in the training set $\mathcal{D}_m$ and outputs an encoder function $x^n(w)$ and a decoder function $\hat{w}(y^n)$,
\item an encoder function $x^n(w): [1:M] \mapsto \mathcal{X}^n$ that assigns an $n$-length codeword to a message $w$, and
\item a decoder function $\hat{w}(y^n): \mathcal{Y}^n\mapsto [1:M]$ that maps the received sequence $y^n$ into a message $\hat{w}$. The decoder should satisfy the statistical reliability constraint
\begin{align}\label{eq:statistical_reliability}
    \mathbb{P}_{\mathcal{D}_m}\left\{\mathrm{P}_{\mathrm{e}}^{(M,n)} \leq \epsilon\right\} \geq 1-\delta,
\end{align}
where $\mathrm{P}_{\mathrm{e}}^{(M,n)}$ is the average error probability of the learned encoding and decoding function from the training set $\mathcal{D}_m$. This quantity is defined as
\begin{align}
    \mathrm{P}_{\mathrm{e}}^{(M,n)} \triangleq \frac{1}{M}\sum_{i = 1}^{M}\mathbb{P}\left(\hat{w}(Y^n) \neq W\;|\; x^n(W)\;\mathrm{ sent}\right).
\end{align}
\end{enumerate}
\end{definition}

There are several differences between the $(M,n,\epsilon,\delta)$ learning-based channel code in Definition \ref{definition:learnable_code} and the $(M,n,\epsilon')$ channel code commonly used in the finite blocklength literature (c.f. \cite{Polyanskiy:2010}). Firstly, a training set $\mathcal{D}_{m}$ and a learning algorithm $\mathcal{A}$ are introduced. The training set $\mathcal{D}_{m}$ contains partial knowledge of the unknown channel law and this knowledge is utilized by the learning algorithm $\mathcal{A}$ for the code design. Secondly, we introduce a confidence parameter $\delta$ which indicates how likely our channel code meet the error probability constraint $\mathrm{P}_{\mathrm{e}}^{(M,n)} \leq \epsilon$. This is necessary due to the stochastic nature of the training set $\mathcal{D}_{m}$. There may be a small chance that the training set $\mathcal{D}_{m}$ is uninformative (e.g. the training set contains $m$ repetitions of $x'$ and $y'$ even though $|\mathcal{X}| > 1$ and $|\mathcal{Y}| > 1$). Consequently, the learning algorithm might produce ``bad'' encoding and decoding functions such that $\mathrm{P}_{\mathrm{e}}^{(M,n)} > \epsilon$. The reliability constraint in \eqref{eq:statistical_reliability} resembles the criterion for the probably approximately correct (PAC) learning framework in the statistical learning theory literature.

%Ideally, we want $\epsilon$ and $\delta$ to be as small as possible.

While Definition \ref{definition:learnable_code} treats the average error probability as a random variable, the following lemma shows that $\mathrm{P}_{\mathrm{e}}^{(M,n)}$, averaged over all realizations of $\mathcal{D}_m$, is bounded above by the $\epsilon$ and $\delta$ parameters of the learning-based channel code. Ideally, we want both $\epsilon$ and $\delta$ to be as small as possible.
\begin{lemma}\label{lemma:learning_code_property}
For any $(M,n,\epsilon,\delta)$ learning-based channel code, we have $\mathbb{E}_{\mathcal{D}_m}\left[\mathrm{P}_{\mathrm{e}}^{(M,n)}\right] \leq \epsilon+\delta$.
\end{lemma}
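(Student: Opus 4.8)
The plan is to bound the expectation by splitting it over the ``good'' event on which the learned code meets the target error probability and over its complement. First I would introduce the event $G \triangleq \{\mathrm{P}_{\mathrm{e}}^{(M,n)} \leq \epsilon\}$, so that the statistical reliability constraint \eqref{eq:statistical_reliability} becomes $\mathbb{P}_{\mathcal{D}_m}(G) \geq 1-\delta$, or equivalently $\mathbb{P}_{\mathcal{D}_m}(G^c) \leq \delta$. The only additional ingredient I need is that $\mathrm{P}_{\mathrm{e}}^{(M,n)}$ is bounded: since it is an average over the $M$ messages of conditional error probabilities, each lying in $[0,1]$, we have $0 \leq \mathrm{P}_{\mathrm{e}}^{(M,n)} \leq 1$ for every realization of $\mathcal{D}_m$.

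With these two facts I would expand the expectation using the indicator functions of $G$ and $G^c$:
\[
\mathbb{E}_{\mathcal{D}_m}\!\left[\mathrm{P}_{\mathrm{e}}^{(M,n)}\right]
= \mathbb{E}_{\mathcal{D}_m}\!\left[\mathrm{P}_{\mathrm{e}}^{(M,n)} \mathbbm{1}_{\{G\}}\right]
+ \mathbb{E}_{\mathcal{D}_m}\!\left[\mathrm{P}_{\mathrm{e}}^{(M,n)} \mathbbm{1}_{\{G^c\}}\right].
\]
On $G$ the integrand is at most $\epsilon$, so the first term is at most $\epsilon\,\mathbb{P}_{\mathcal{D}_m}(G) \leq \epsilon$; on $G^c$ I would invoke the crude bound $\mathrm{P}_{\mathrm{e}}^{(M,n)} \leq 1$, so that the second term is at most $\mathbb{P}_{\mathcal{D}_m}(G^c) \leq \delta$. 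Summing the two contributions gives $\mathbb{E}_{\mathcal{D}_m}[\mathrm{P}_{\mathrm{e}}^{(M,n)}] \leq \epsilon + \delta$, as claimed.

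This is the standard ``split on the PAC-type confidence event'' argument, so I do not expect any genuine obstacle. The only point meriting care is the estimate on $G^c$: because $\delta$ is intended to be small, one must not discard that event outright, and the boundedness $\mathrm{P}_{\mathrm{e}}^{(M,n)} \leq 1$ is precisely what converts its small probability into the additive $\delta$ term rather than something larger.
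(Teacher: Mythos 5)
Your proof is correct and uses exactly the same argument as the paper: split the expectation on the event $\{\mathrm{P}_{\mathrm{e}}^{(M,n)} \leq \epsilon\}$, bound the integrand by $\epsilon$ there and by $1$ on the complement, and invoke the reliability constraint \eqref{eq:statistical_reliability} to bound the complement's probability by $\delta$. No discrepancies to report.
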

\begin{proof}
    From \eqref{eq:statistical_reliability}, we have
     \begingroup
    \allowdisplaybreaks
    \begin{align*}
        \mathbb{E}[\mathrm{P}_{\mathrm{e}}^{(M,n)}] \leq &\mathbb{E}_{\mathcal{D}_m}\left[\epsilon \cdot\mathbbm{1}_{\{\mathrm{P}_{\mathrm{e}}^{(M,n)} \leq \epsilon\}} + 1\cdot \mathbbm{1}_{\{\mathrm{P}_{\mathrm{e}}^{(M,n)} > \epsilon\}}\right]\\
        = & \epsilon\cdot\mathbb{P}_{\mathcal{D}_m}\left\{\mathrm{P}_{\mathrm{e}}^{(M,n)} \leq \epsilon\right\}+1\cdot\mathbb{P}_{\mathcal{D}_m}\left\{\mathrm{P}_{\mathrm{e}}^{(M,n)} > \epsilon\right\}\\
        \leq & \epsilon\cdot 1 + 1\cdot\delta = \epsilon + \delta.
    \end{align*}
    \endgroup
    % \begingroup
    % \allowdisplaybreaks
    % \begin{align*}
    %     \mathrm{P}_{\mathrm{e}}^{(M,n)} \leq &\epsilon\cdot\mathbb{P}_{\mathcal{D}_m}\left\{\mathrm{P}_{\mathrm{e}}^{(M,n)} \leq \epsilon\right\}+1\cdot\mathbb{P}_{\mathcal{D}_m}\left\{\mathrm{P}_{\mathrm{e}}^{(M,n)} > \epsilon\right\}\\
    % \leq & \epsilon\cdot 1 + 1\cdot\delta = \epsilon + \delta.
    % \end{align*}
    % \endgroup
\end{proof}

% While the results in this paper will use Definition \ref{definition:learnable_code}, the following lemma shows that an $(M,n,\epsilon,\delta)$ learning-based channel code is an $(M,n,\epsilon')$ channel code for sufficiently large $m$.
% \begin{lemma}\label{lemma:learning_code_property}
%     For sufficiently large $m$, an $(M,n,\epsilon,\delta)$ learning-based channel code is an $(M,n,\epsilon+\delta)$ channel code.
% \end{lemma}
% \begin{proof}
%     Assume $m$ sufficiently large such that \eqref{eq:statistical_reliability} holds. The average error probability can be bounded as
%     \begin{align*}
%          \mathrm{P}_{\mathrm{e}}^{(M,n)} \leq &\epsilon\cdot\mathbb{P}_{\mathcal{D}_m}\left\{\mathrm{P}_{\mathrm{e}}^{(M,n)} \leq \epsilon\right\}+1\cdot\mathbb{P}_{\mathcal{D}_m}\left\{\mathrm{P}_{\mathrm{e}}^{(M,n)} > \epsilon\right\}\\
%         \leq & \epsilon\cdot 1 + 1\cdot\delta = \epsilon + \delta.
%     \end{align*}
% \end{proof}
From this problem formulation we are now led to the following questions: 
\begin{itemize}
    \item \emph{Given a message index set of size $M$, $n$ channel uses, and a confidence parameter $\delta$, what is the smallest achievable probability of error $\epsilon^*$ that can be achieved by a learning-based channel code?} 
    %trained on some training set $\mathcal{D}_m$?}
    \item \emph{Suppose we fix a target error probability $\epsilon$ and a confidence parameter $\delta$. Define the code rate $R$ as}
    \begin{align}
        R \triangleq \frac{\log M}{n} \;\mathrm{in}\;\mathrm{bits/channel\;use}.
    \end{align}
    \emph{For a fixed blocklength $n$, what is the highest code rate $R^*$ of the learning-based code such that the statistical reliability constraint is satisfied?} 
    %trained on $\mathcal{D}_m$ such that the statistical reliability constraint is satisfied?}
\end{itemize}

It is difficult to identify $\epsilon^*$ and $R^*$ in general, even if the channel is perfectly known by the encoder and decoder. To this end, we establish upper and lower bounds for the desired quantities. Since we do not have direct access to the full description of the channel, the bounds should depend on the training set $\mathcal{D}_{m}$ and not on the unknown channel $P_{Y|X}(y|x)$.  Such requirement is natural in many statistical learning problems so we borrow some tools from statistical learning theory to establish theoretical performance guarantees of learning-based channel codes in terms of $\mathcal{D}_m$.

\section{Non-asymptotic Performance Bounds}
\label{section:main_result}
%In this section, we establish an upper and lower bound on $\epsilon^*$. 

Before we state our main results, we first present two propositions that will play important roles in our derivations.

\begin{proposition} 
(\textbf{Change of measure inequality from the constrained representation of the total variation distance \cite[Lemma 4]{Ohnishi:2021}})\label{proposition:var_representation_TV} Let $\phi: \mathcal{H}\mapsto [0,1]$ be a real-valued function. Let $P$ and $Q$ denote the probability measures over the Borel $\sigma$-field on $\mathcal{H}$. Then,
\begin{align}
    \forall \text{$Q$ on $\mathcal{H}$}:\; \mathbb{E}_{Q}[\phi] \leq \mathrm{TV}(Q,P) + \mathbb{E}_{P}[\phi],
\end{align}
where $\mathrm{TV}(Q,P) \triangleq \sup_{E} |P[E] - Q[E]|$ is the total variation distance and the supremum is over all events $E$.
\end{proposition}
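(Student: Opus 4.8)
The plan is to establish the equivalent two-sided statement $\mathbb{E}_{Q}[\phi] - \mathbb{E}_{P}[\phi] \leq \mathrm{TV}(Q,P)$, from which the claimed inequality follows at once by transposing $\mathbb{E}_{P}[\phi]$ to the right-hand side. The whole argument rests on a single change-of-measure step, controlled by where the two measures place more mass.

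First I would introduce a common dominating measure $\mu$ (for instance $\mu = P + Q$, or simply the counting measure since the alphabets are finite in our setting) and write the Radon--Nikodym densities $p \triangleq dP/d\mu$ and $q \triangleq dQ/d\mu$. The difference of expectations then becomes $\mathbb{E}_{Q}[\phi] - \mathbb{E}_{P}[\phi] = \int_{\mathcal{H}} \phi\,(q - p)\,d\mu$.

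Next I would split the domain according to the sign of $q - p$. Let $A \triangleq \{h \in \mathcal{H} : q(h) \geq p(h)\}$. Decomposing the integral over $A$ and $A^{c}$, I would apply the two-sided bound $0 \leq \phi \leq 1$ in opposite directions on the two pieces: on $A$ the factor $q - p$ is nonnegative, so $\phi \leq 1$ gives $\int_{A}\phi\,(q-p)\,d\mu \leq \int_{A}(q-p)\,d\mu = Q[A] - P[A]$; on $A^{c}$ the factor $q - p$ is negative, so $\phi \geq 0$ gives $\int_{A^{c}}\phi\,(q-p)\,d\mu \leq 0$. Adding the two pieces yields $\mathbb{E}_{Q}[\phi] - \mathbb{E}_{P}[\phi] \leq Q[A] - P[A]$. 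Finally, since $A$ is a particular event, $Q[A] - P[A] \leq \sup_{E}|P[E] - Q[E]| = \mathrm{TV}(Q,P)$, which closes the argument.

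There is no substantial obstacle here, as the result is essentially the variational characterization of the total variation distance restricted to $[0,1]$-valued test functions. The only point requiring care is the bookkeeping of signs: one must invoke the upper bound $\phi \leq 1$ precisely on the region where $q - p \geq 0$ and the lower bound $\phi \geq 0$ on the complementary region, so that both pieces are controlled in the same direction. Equivalently, one may appeal to the Hahn--Jordan decomposition $Q - P = \nu^{+} - \nu^{-}$ and bound $\int \phi\,d\nu^{+} \leq \nu^{+}(\mathcal{H})$ and $\int \phi\,d\nu^{-} \geq 0$; the set $A$ above is exactly a Hahn positive set for the signed measure $Q - P$.
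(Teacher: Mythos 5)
Your proof is correct. Note, however, that the paper does not actually prove this proposition: it is imported verbatim, with a citation, as Lemma 4 of \cite{Ohnishi:2021}, so there is no in-paper argument to compare against; what you have written is a correct, self-contained replacement for that citation. The density decomposition with respect to $\mu = P+Q$, the split along the set $A = \{q \ge p\}$, and the opposite uses of the two bounds ($\phi \le 1$ on $A$, $\phi \ge 0$ on $A^{c}$) are all sound, and the final step $Q[A]-P[A] \le \sup_{E}|P[E]-Q[E]| = \mathrm{TV}(Q,P)$ matches the paper's definition of the total variation distance. In effect you have proved one direction of the ``constrained representation'' alluded to in the proposition's title, namely $\sup_{0\le \phi\le 1}\left(\mathbb{E}_{Q}[\phi]-\mathbb{E}_{P}[\phi]\right) \le \mathrm{TV}(Q,P)$, of which the stated inequality is an immediate rearrangement; this is exactly the content of the cited lemma. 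One cosmetic caveat: your parenthetical suggestion of using the counting measure as the dominating measure is only valid for discrete $\mathcal{H}$, whereas the proposition is stated for general Borel probability measures, so you should rely on the choice $\mu = P+Q$ (which you already give as the primary option and which works in all cases).
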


\begin{proposition}(\textbf{Bound on KL-divergence between empirical and true distribution\cite[Theorem 3]{Seldin:2009}}) \label{proposition:KL_bound} Let $X_1,\cdots, X_N$ be i.i.d. drawn according to
$P_X(x)$ and let $|X|$ be the cardinality of $X$. Denote by
$\hat{P}_X(x)$ the empirical distribution of $X_1,\cdots, X_N$. Then, with probability greater than $1-\delta$:
\begin{align}
    \mathrm{KL}\left(\hat{P}_X||P_{X}\right) \leq \frac{(|X|-1)\ln (N+1) - \ln \delta}{N},
\end{align}
where $\mathrm{KL}\left(\hat{P}_X||P_{X}\right)$ is the Kullback-Leibler (KL) divergence.
\end{proposition}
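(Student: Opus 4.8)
The plan is to prove this via the \emph{method of types}, treating the empirical distribution $\hat{P}_X$ as the type of the i.i.d.\ sample $(X_1,\dots,X_N)$. The whole argument reduces to a union bound over the (only polynomially many) possible types, where each type class carries an exponentially small probability governed by its KL divergence from $P_X$. Throughout I would work with KL divergence and entropy in nats, so that the $\ln$ in the claimed bound appears naturally, and I would abbreviate $k := |X|$.

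First I would recall the two standard combinatorial facts over an alphabet of size $k$. The number of distinct types of length-$N$ sequences is at most $(N+1)^{k-1}$, since a type is pinned down by its $k$ nonnegative integer counts summing to $N$. The second, and crucial, fact is the type-class probability bound: writing $T(Q)$ for the set of length-$N$ sequences whose empirical distribution is $Q$, I claim that for every type $Q$,
\[
    P_X^N\!\left[\hat{P}_X = Q\right] = P_X^N\!\left(T(Q)\right) \leq e^{-N\,\mathrm{KL}(Q\|P_X)}.
\]
I would derive this from the elementary identity $P_X^N(x^N) = e^{-N(H(Q) + \mathrm{KL}(Q\|P_X))}$, valid for any $x^N \in T(Q)$, combined with the type-class size bound $|T(Q)| \leq e^{N H(Q)}$; the latter follows because each sequence in $T(Q)$ has $Q$-probability exactly $e^{-N H(Q)}$ and these masses sum to at most one. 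Multiplying the size bound by the per-sequence probability produces the stated cancellation in the exponent.

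Next I would control the upper tail of $\mathrm{KL}(\hat{P}_X\|P_X)$. For a threshold $t>0$, decomposing over types and applying the two facts above gives
\[
    P_X^N\!\left[\mathrm{KL}(\hat{P}_X\|P_X) > t\right]
    = \sum_{Q:\,\mathrm{KL}(Q\|P_X)>t} P_X^N\!\left[\hat{P}_X = Q\right]
    \leq (N+1)^{k-1}\, e^{-Nt},
\]
where each surviving term is bounded by $e^{-Nt}$ and there are at most $(N+1)^{k-1}$ of them. Setting the right-hand side equal to $\delta$ and solving for $t$ yields $t = \bigl((k-1)\ln(N+1) - \ln\delta\bigr)/N$, so the event $\mathrm{KL}(\hat{P}_X\|P_X) > t$ has probability at most $\delta$. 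Passing to the complementary event gives the claim.

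I expect the only delicate step to be the clean justification of the type-class probability bound, i.e.\ pinning down the exact factor-of-one cancellation between $|T(Q)| \leq e^{NH(Q)}$ and the per-sequence probability $e^{-N(H(Q)+\mathrm{KL}(Q\|P_X))}$; once that exponential decay is secured, the polynomial type count $(N+1)^{k-1}$ is precisely what keeps the union bound lossless up to the logarithmic penalty appearing in the final expression. Everything else is bookkeeping.
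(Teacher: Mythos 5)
Your proof is correct and complete: the type-counting bound $(N+1)^{|X|-1}$, the type-class probability bound $P_X^N(T(Q)) \leq e^{-N\,\mathrm{KL}(Q\|P_X)}$, and the union bound over types combine exactly as you describe, and solving $(N+1)^{|X|-1}e^{-Nt} = \delta$ for $t$ gives the stated threshold. The paper itself imports this proposition from the cited reference (Seldin and Tishby) without proof, and your method-of-types argument is precisely the standard derivation used there, so there is nothing to reconcile.
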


Proposition \ref{proposition:var_representation_TV} is used to change the measure of the expectation from the (unknown) true distribution $P_{X,Y}$ to the empirical distribution $\hat{P}_{X,Y}$. Proposition \ref{proposition:KL_bound} is combined with Bretagnolle-Huber Inequality \cite{Bretagnolle:1978} to obtain data-dependent bounds.

\subsection{Achievability Result}

We now state the achievability result of this paper. The achievability result is primarily based on the random coding union (RCU) bound established in \cite[Theorem 16]{Polyanskiy:2010}.
\begin{theorem}\label{theorem:RCU_with_learning}
    Fix a channel input distribution $P_{X}(x)$. Let $\mathcal{D}_{m}$ be the size-$m$ training set on which the encoder and decoder are trained. Then, there exists a $(2^{nR},n,\epsilon,\delta)$ learning-based code whose error probability is upper bounded as
\begin{align}\label{eq:RCU_with_learning}
        \epsilon^*\leq& \min_{\substack{n_0 \leq n\\ n_0\in\mathbb{Z}^+}}\Bigg\{\mathbb{E}_{P_{X}^{n_0}\hat{P}_{Y|X}^{n_0}}\left[\min\left\{1,\frac{L(\lceil2^{\frac{nR}{L}}\rceil^L-1)}{2^{i_{\hat{P}}\left(X^{n_0},Y^{n_0}\right)}}\right\}\right]\nonumber\\
        &+\sqrt{1-\exp\left(-\frac{n_0\left((|\mathcal{X}||\mathcal{Y}|-1)\ln (m+1) - \ln\delta\right)}{m}\right)}\Bigg\},
    \end{align}
    where $L = \lfloor n/n_0\rfloor$, $\hat{P}_{Y|X}$ is the estimate of the unknown channel law $P_{Y|X}$ based on $\mathcal{D}_m$ which can be expressed as
    \begin{align*}
    \hat{P}_{Y|X}(y|x) 
    = \frac{\sum_{\mathcal{D}_m} \mathbbm{1}_{\left\{ (x,y) = (x_i',y_i')\right\}} }{\sum_{\mathcal{D}_m}\mathbbm{1}_{\{x = x_i'\}}},\;\forall (x,y)\in \mathcal{X}\times\mathcal{Y},
\end{align*}
    and $i_{\hat{P}}(x,y)$ is the mutual information density with respect to the measure $\hat{P} = \hat{P}_{Y|X}P_{X}$ and can be written as
    \begin{align}\label{eq:information_density}
        i_{\hat{P}}(x,y) \triangleq \log \frac{\hat{P}_{Y|X}(y|x)}{\mathbb{E}[\hat{P}_{Y|X}(y|X)]}.
    \end{align}
\end{theorem}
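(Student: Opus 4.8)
The plan is to combine a random-coding argument built on the RCU bound with a data-driven (mismatched) decoder, localize a single change of measure to one short block, and turn the resulting empirical-versus-true divergence into the square-root penalty via Propositions \ref{proposition:var_representation_TV} and \ref{proposition:KL_bound} together with the Bretagnolle--Huber inequality \cite{Bretagnolle:1978}. First I would fix an admissible $n_0\le n$, set $L=\lfloor n/n_0\rfloor$ and $M_0=\lceil 2^{nR/L}\rceil$, and construct a random product codebook: a message is split into $L$ sub-messages, block $\ell$ carrying one of $M_0$ length-$n_0$ codewords drawn i.i.d.\ from $P_X^{n_0}$, so that the codebook holds $M_0^L\ge 2^{nR}$ codewords over $Ln_0\le n$ channel uses. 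Since the true law is unknown, the decoder is \emph{mismatched}: it scores candidates using the plug-in estimate $\hat{P}_{Y|X}$, so the governing quantity is the information density $i_{\hat{P}}$ of \eqref{eq:information_density}. Applying \cite[Theorem 16]{Polyanskiy:2010}, with the union step kept \emph{inside} the outer $\min\{1,\cdot\}$, bounds the codebook-averaged error, still evaluated under the true channel $P_{Y|X}$, by the expectation of a $[0,1]$-valued $\min$-term in the $L$ block information densities.

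Next I would use the memorylessness of the channel and the symmetric product structure to upper bound this full-length expectation by a \emph{single}-block expectation, charging the competing codewords across the $L$ blocks to the enlarged multiplicity $L(M_0^L-1)$; this produces the first summand of \eqref{eq:RCU_with_learning}, but still under $P_X^{n_0}P_{Y|X}^{n_0}$. I would then invoke Proposition \ref{proposition:var_representation_TV} with $\phi$ taken to be the block $\min$-term, which indeed maps into $[0,1]$, to change the measure from the true block law $P_X^{n_0}P_{Y|X}^{n_0}$ to the estimated law $P_X^{n_0}\hat{P}_{Y|X}^{n_0}$. This costs an additive total-variation term and leaves exactly $\mathbb{E}_{P_X^{n_0}\hat{P}_{Y|X}^{n_0}}[\min\{1, L(M_0^L-1)2^{-i_{\hat{P}}(X^{n_0},Y^{n_0})}\}]$.

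To control the penalty I would use additivity of divergence over the $n_0$ independent coordinates, so that the block-level divergence is $n_0$ times the single-letter divergence between the empirical and true distributions; bound this single-letter divergence by Proposition \ref{proposition:KL_bound} with $N=m$ and alphabet size $|\mathcal{X}||\mathcal{Y}|$, which holds with probability at least $1-\delta$ over $\mathcal{D}_m$; and finally apply Bretagnolle--Huber, $\mathrm{TV}\le\sqrt{1-\exp(-\mathrm{KL})}$, to recover the square-root term of \eqref{eq:RCU_with_learning}. Because the entire argument is valid for every admissible $n_0$, I would take the minimum over $n_0$; and because the bound holds on average over the random codebook, a standard derandomization, letting the algorithm $\mathcal{A}$ output for each $\mathcal{D}_m$ a codebook no worse than the average, yields a deterministic code whose error obeys the bound on the probability-$(1-\delta)$ event on which Proposition \ref{proposition:KL_bound} is valid. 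This is precisely the statistical reliability constraint \eqref{eq:statistical_reliability}, establishing the claimed $(2^{nR},n,\epsilon,\delta)$ code.

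The hard part will be the block collapse. The change of measure must be confined to a single length-$n_0$ block, because the divergence, and hence the total-variation penalty, grows with blocklength; a change of measure over all $n$ coordinates would make the penalty vacuous as $n\to\infty$. Producing a valid single-block $\min$-term with multiplicity $L(M_0^L-1)$, rather than the looser estimate a naive per-block union argument would give, while keeping the empirical-to-true divergence at the level of one length-$n_0$ block, is the delicate step; the resulting tension between a sharper information-density term (large $n_0$) and a smaller estimation penalty (small $n_0$) is exactly what the outer minimization over $n_0$ is there to resolve.
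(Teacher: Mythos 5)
Your proposal is correct and follows essentially the same route as the paper: the same mini-codebook of size $M_0=\lceil 2^{nR/L}\rceil$ with $L$-fold extension and block-wise empirical ML decoding, the same RCU-style union bound collapsed to a single length-$n_0$ block with multiplicity $L(M_0^L-1)$, the same change of measure via Proposition \ref{proposition:var_representation_TV}, and the same KL tensorization combined with Bretagnolle--Huber and Proposition \ref{proposition:KL_bound}, followed by optimization over $n_0$. You correctly identified the single-block collapse as the crux (the paper executes it via the Bernoulli inequality $1-(1-\lambda^{(1)})^L\le L\lambda^{(1)}$ and the identical distribution of the $L$ sub-blocks), and your explicit derandomization step to meet constraint \eqref{eq:statistical_reliability} is a point the paper leaves implicit.
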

\begin{proof}
See Appendix \ref{proof:achievability}.
\end{proof}
The idea behind Theorem \ref{theorem:RCU_with_learning} is to design a mini codebook $\mathcal{C}_{M_0,n_0}$ whose $L$-th order extension satisfies $M_0^L \geq 2^{nR}$ and $n_0L \leq n$. We then analyze this extended codebook. The receiver implements a block-wise decoder for the mini codebook and applies it $L$ times to the the length-$n_0L$ received sequence to recover the message $w$. \textcolor{black}{The key steps in upper bounding the error probability closely follow the proof of the RCU bound in \cite{Polyanskiy:2010}. Then, Propositions \ref{proposition:var_representation_TV} and \ref{proposition:KL_bound} are used to remove the dependence of the bound on the unknown DMC $P_{Y|X}$.}

The result has a nice intuition. The first term of \eqref{eq:RCU_with_learning} is the RCU bound with respect to the empirical measure $\hat{P}_{Y|X}$ and the learned decoder. Meanwhile, the second term accounts for the ``distance'' between our empirical product distribution and the true product distribution. Increasing $n_0$ reduces the first term since we are encoding using larger blocklengths. However, this first term could be far from the error incurred by the mismatched decoder on the unknown true channel. This discrepancy is captured by the second term, which grows with $n_0$. Hence, the parameter $n_0$ should be chosen carefully.

%We optimize the length of the mini codeword so that our empirical estimate of the channel and the ML decoder is not too far from the true conditional pmf $P_{Y|X}^{n_0}$. Note, however, that 
\emph{Remark:} Theorem \ref{theorem:RCU_with_learning} does not necessarily imply that it is better to encode the data in concatenated short packets instead of a single large block. Rather, when $P_{Y|X}$ is not available in our analysis, we can upper bound the error using the empirical product measure $\hat{P}_{Y|X}^{n_0}$ obtained from $\mathcal{D}_m$ but with some penalty term that grows with $n_0$ due to the lack of information about the channel. This penalty term might not be tight in general.

%We note that the Theorem \ref{theorem:RCU_with_learning} does not necessarily mean that it is better to encode the data using concatenated short packets rather than to encode it in a single large block.

%The result has a nice intuition. The first term of \eqref{eq:RCU_with_learning} is the RCU bound with respect to the empirical measure $\hat{P}_{Y|X}$ and the learned decoder. Meanwhile, the second term bounds the error in estimating the product distribution $P_{Y|X}^{n_0}$. Increasing $n_0$ reduces the first term (since we are encoding using larger blocklengths) but increases the second term (since it is more difficult to learn distributions with higher dimensionality). Hence, the parameter $n_0$ should be chosen carefully.

\subsection{Converse Result}
To establish the converse result, we present relevant concepts in binary hypothesis testing. 

The main objective of binary hypothesis testing is to determine if a sequence $X^n$ is generated according to a distribution \textcolor{black}{$P_{X^n}$} or an alternate distribution \textcolor{black}{$Q_{X^n}$}. A test $Z$ is designed which outputs 0 when it chooses \textcolor{black}{$P_{X^n}$} and 1 when it chooses \textcolor{black}{$Q_{X^n}$}. We focus on randomized tests defined by the stochastic kernel $P_{Z|X^n}(z|x^n): \mathcal{X}^n\mapsto \{0,1\}$. The following definition plays an important role in the binary hypothesis testing problem:
\begin{definition}(\textbf{Neyman-Pearson Region and Beta Function})\label{definition:neyman_pearson_func}
Suppose we define $P[Z = 0] = \int P_{Z|X^n}(0|x^n)dP_{X^n}$ 
 as the probability of success given $X^n\sim P_{X^n}$ and $Q[Z = 0] = \int P_{Z|X^n}(0|x^n)dQ_{X^n}$ as the probability of error given $X^n\sim Q_{X^n}$. Then, the Neyman-Pearson region is the set of all achievable points for all randomized tests:
\begin{align}
    \mathcal{R}(P,Q) \triangleq \left\{(P[Z=0],Q[Z=0]),\;\forall\;P_{Z|X^n}\right\}.
\end{align}
The Neyman-Pearson beta function is the lower boundary of $\mathcal{R}(P,Q)$ and is defined as
\begin{align}
 \textcolor{black}{\beta_{\alpha}(P_{X^n},Q_{X^n})} = \inf_{P_{Z|X^n}:P[Z = 0] \geq \alpha} Q[Z = 0],
\end{align}
where $\alpha \in [0,1]$.
\end{definition}
We now state the converse result of this paper. The converse result is primarily based on the metaconverse theorem established in \cite[Theorem 27]{Polyanskiy:2010}. \textcolor{black}{Similar to Theorem \ref{theorem:RCU_with_learning}, the dependence of the converse bound on the unknown $P_{Y|X}$ is removed by applying Propositions \ref{proposition:var_representation_TV} and \ref{proposition:KL_bound}. }

\begin{theorem}\label{theorem:metaconverse_learning}
    Fix an arbitrary channel output distribution \textcolor{black}{$Q_{Y^n}(y^n)$}. Let $\mathcal{D}_m$ be the size-$m$ training set on which the encoder and decoder are trained and $\hat{P}_{Y|X}$ be the empirical distribution of the channel law based on $\mathcal{D}_{m}$. Every $(2^{nR},n,\epsilon,\delta)$ learning-based code must satisfy
    \textcolor{black}{
\begin{align}\label{eq:metaconverse_learning}
        2^{nR} \leq \sup_{P_{X^n}}\frac{1}{\beta_{\max\{0,1-\epsilon-\kappa\}}\left(\hat{P}_{Y|X}^{n}P_{X^n},Q_{Y^n}P_{X^n}\right)},
    \end{align}
    }
    where $\kappa = \sqrt{1-\exp\left(-\frac{n\left((|\mathcal{X}||\mathcal{Y}|-1)\ln (m+1) - \ln \delta\right)}{m}\right)}$.
\end{theorem}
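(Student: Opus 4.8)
The plan is to mirror the achievability argument: I would begin from the meta-converse theorem \cite[Theorem 27]{Polyanskiy:2010} stated for the true (but unknown) channel and then peel off its dependence on $P_{Y|X}$ using the two propositions. For any code realized by the learning algorithm, the meta-converse gives, for every auxiliary $Q_{Y^n}$, the deterministic bound
\[
2^{nR} \leq \sup_{P_{X^n}} \frac{1}{\beta_{1-\mathrm{P}_{\mathrm{e}}^{(M,n)}}\!\left(P_{Y|X}^{n}P_{X^n},\, Q_{Y^n}P_{X^n}\right)}
\]
once the code, and hence $\mathrm{P}_{\mathrm{e}}^{(M,n)}$, is fixed by $\mathcal{D}_m$. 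Since $\beta_{\alpha}$ is non-decreasing in $\alpha$, on the event $\{\mathrm{P}_{\mathrm{e}}^{(M,n)}\leq\epsilon\}$ --- which by \eqref{eq:statistical_reliability} has probability at least $1-\delta$ --- I can replace the subscript $1-\mathrm{P}_{\mathrm{e}}^{(M,n)}$ by $1-\epsilon$ while only enlarging the right-hand side. The essential remaining task is to swap the true product measure inside $\beta$ for the empirical one $\hat{P}_{Y|X}^{n}P_{X^n}$.

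For this measure change I would work at the level of the Neyman-Pearson test. Let $Z^{\star}$ be the optimal randomized test attaining $\beta_{1-\epsilon}(P_{Y|X}^{n}P_{X^n}, Q_{Y^n}P_{X^n})$, and feed its acceptance function $\phi(x^n,y^n)=P_{Z^{\star}|X^nY^n}(0\,|\,x^n,y^n)\in[0,1]$ into Proposition \ref{proposition:var_representation_TV}. Applying that proposition with $Q=P_{Y|X}^{n}P_{X^n}$ and $P=\hat{P}_{Y|X}^{n}P_{X^n}$ shows that the success probability of $Z^{\star}$ under the empirical measure drops by at most $\mathrm{TV}(P_{Y|X}^{n}P_{X^n}, \hat{P}_{Y|X}^{n}P_{X^n})$, so $Z^{\star}$ stays feasible for the constraint defining $\beta_{1-\epsilon-\mathrm{TV}}(\hat{P}_{Y|X}^{n}P_{X^n}, Q_{Y^n}P_{X^n})$. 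Because the swap leaves the second argument $Q_{Y^n}P_{X^n}$ untouched, I obtain $\beta_{1-\epsilon-\mathrm{TV}}(\hat{P}_{Y|X}^{n}P_{X^n},\,\cdot\,)\leq\beta_{1-\epsilon}(P_{Y|X}^{n}P_{X^n},\,\cdot\,)$, and taking reciprocals transfers the bound onto the empirical measure at the inflated miss level $1-\epsilon-\mathrm{TV}$.

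It then remains to certify $\mathrm{TV}(P_{Y|X}^{n}P_{X^n}, \hat{P}_{Y|X}^{n}P_{X^n})\leq\kappa$. Here I would apply the Bretagnolle-Huber inequality \cite{Bretagnolle:1978} to pass from total variation to KL divergence, $\mathrm{TV}\leq\sqrt{1-\exp(-\mathrm{KL})}$, tensorize the KL over the $n$ channel uses, and invoke the single-letter empirical-versus-true estimate of Proposition \ref{proposition:KL_bound} with $N=m$ and alphabet size $|\mathcal{X}||\mathcal{Y}|$, which holds with probability at least $1-\delta$ and produces exactly the exponent inside $\kappa$. Combining $\mathrm{TV}\leq\kappa$ with the monotonicity of $\beta$ lowers the miss level from $1-\epsilon-\mathrm{TV}$ to $1-\epsilon-\kappa$; clipping it at $\max\{0,1-\epsilon-\kappa\}$ to stay in $[0,1]$ and taking the supremum over $P_{X^n}$ then yields \eqref{eq:metaconverse_learning}.

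The step I expect to be hardest is the total-variation bound of the last paragraph. The KL divergence of the $n$-fold joint measures tensorizes as $\sum_{i}\mathbb{E}_{P_{X_i}}[\mathrm{KL}(\hat{P}_{Y|X}(\cdot|X_i)\,\|\,P_{Y|X}(\cdot|X_i))]$, whose per-use weighting is dictated by the arbitrary (and supremized) code input marginals $P_{X_i}$ rather than by the uniform training input under which Proposition \ref{proposition:KL_bound} is phrased; arranging that the clean per-use exponent survives this mismatch, while simultaneously controlling the reliability event and the Proposition \ref{proposition:KL_bound} event on one and the same draw of $\mathcal{D}_m$, is the delicate part of the argument.
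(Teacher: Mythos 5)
Your overall route is the same as the paper's: start from the metaconverse of \cite[Theorem 27]{Polyanskiy:2010} for the true channel, move the first argument of $\beta$ from $P_{Y|X}^n P_{X^n}$ to $\hat{P}_{Y|X}^n P_{X^n}$ at the cost of lowering the significance level by $\mathrm{TV}(P_{Y|X}^n P_{X^n},\hat{P}_{Y|X}^n P_{X^n})$ via Proposition \ref{proposition:var_representation_TV}, and then certify $\mathrm{TV}\leq\kappa$ with probability $1-\delta$ via Bretagnolle--Huber plus Proposition \ref{proposition:KL_bound}. Your measure-change step is a valid (equivalent) variant: you argue feasibility of the optimal Neyman--Pearson test under the empirical measure at the inflated level, whereas the paper argues that relaxing the constraint $P_{Y|X}^n P_{X^n}[Z=0]\geq 1-\epsilon$ to $\hat{P}_{Y|X}^n P_{X^n}[Z=0]+t\geq 1-\epsilon$ enlarges the feasible set and hence decreases the infimum; both yield $\beta_{1-\epsilon}\bigl(P_{Y|X}^n P_{X^n},Q_{Y^n}P_{X^n}\bigr)\geq\beta_{1-\epsilon-t}\bigl(\hat{P}_{Y|X}^n P_{X^n},Q_{Y^n}P_{X^n}\bigr)$.

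The gap is exactly the one you flagged and did not close: the tensorized KL, $\sum_{i}\mathbb{E}_{P_{X_i}}\bigl[\mathrm{KL}\bigl(\hat{P}_{Y|X}(\cdot|X_i)\,\|\,P_{Y|X}(\cdot|X_i)\bigr)\bigr]$, is weighted by the arbitrary, supremized code input marginals, while Proposition \ref{proposition:KL_bound} is stated for the empirical distribution of $m$ i.i.d.\ training pairs drawn with uniform inputs; your proposal ends by calling this ``delicate'' rather than resolving it, so as written it does not produce the exponent in $\kappa$. The paper closes this step with a specific device: it restricts the supremum to $P_{X^n}$ that are distributions of exchangeable sequences (invoking \cite{Polyanskiy:2012}), writes $P_{X^n}=\prod_{i=1}^n P_{X_i|X^{i-1}}$ so that the joint KL decomposes as $\sum_{i=1}^n\mathrm{KL}\bigl(\hat{P}_{Y_i,X_i|X^{i-1}}\,\|\,P_{Y_i,X_i|X^{i-1}}\bigr)$, and then applies Proposition \ref{proposition:KL_bound} to each conditional term using the fact that the conditional alphabet size satisfies $|\mathcal{X}_i\times\mathcal{Y}|\leq|\mathcal{X}||\mathcal{Y}|$, which yields $n$ copies of the per-use exponent and hence $t\leq\kappa$ with probability at least $1-\delta$. (You are also right that the reliability event in \eqref{eq:statistical_reliability} and the Proposition \ref{proposition:KL_bound} event must hold on the same draw of $\mathcal{D}_m$; the paper does not track this intersection explicitly either, asserting only that \eqref{eq:statistical_reliability} ``is easy to verify,'' so on this bookkeeping point your proposal is no less careful than the published proof.) To complete your argument you would need to import the exchangeability restriction and the conditional-alphabet bound, or supply an alternative argument converting the uniform-input guarantee of Proposition \ref{proposition:KL_bound} into one that holds uniformly over input weightings.
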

\begin{proof}
See Appendix \ref{proof:converse}.
\end{proof}
%The parameter $\kappa$ bounds the absolute difference between $\hat{P}_{Y|X}P_{X}[Z = 0]$ and $P_{Y|X}P_{X}[Z = 0]$. 

%The parameter $\kappa$ takes into account the error incurred in learning the channel law from the training set $\mathcal{D}_m$. 

The parameter $\kappa$ arises from the fact that we evaluate the bound using $\hat{P}_{Y|X}$ rather than $P_{Y|X}$. The converse result assumes that the average probability of correct decoding, denoted $\mathrm{P}_{\mathrm{c}}$, of the true channel $P_{Y|X}$ is better than that of our empirical estimate by at least $\kappa$. Hence, we reduce the requirement on $\mathrm{P}_{\mathrm{c}}$ by $\kappa$. Note, however, that $m$ should be sufficiently large for the bound to be non-vacuous, i.e. we want $\kappa + \epsilon < 1$ otherwise we get the trivial bound $2^{nR} \leq \infty$. Lastly, we emphasize that the presence of $\hat{P}_{Y|X}$ in \eqref{eq:metaconverse_learning} is for evaluating the bound using the training data $\mathcal{D}_m$, i.e. a data-dependent bound.

\section{Asymptotic Results}
\label{section:normal_approx}
The bounds in Theorem \ref{theorem:RCU_with_learning} and Theorem \ref{theorem:metaconverse_learning} are not in closed-form. In this section, we perform asymptotic expansions of the bounds. More specifically, we use normal approximation to characterize bounds for $R^*(n,m,\epsilon,\delta)$, the maximum achievable rate of a learning-based channel code for a fixed blocklength $n$, fixed training set size $m$, target error probability $\epsilon$, and confidence parameter $\delta$. Such expansions give insights on how $R^*(n,m,\epsilon,\delta)$ converges to the capacity as the blocklength increases. The following variant of central limit theorem (CLT) will be critical in the analysis.
\begin{theorem}(\textbf{Berry-Esseen CLT \cite[Theorem 2, Chapter
XVI.5]{Feller:1957}})\label{theorem:berry_esseen} Let $\{X_{k}\}_{k=1}^{n}$ be independent random variables with means $\mu_k = \mathbb{E}[X_{k}]$, variances $\sigma_{k}^2 = \mathrm{Var}\left[X_{k}\right]$, and third absolute central moments $\theta_k = \mathbb{E}[|X_{k}-\mu_k|^3]$. Let $\sigma^2 = \sum_{k=1}^{n}\sigma_k^2$ and $\Theta = \sum_{k=1}^{n}\theta_k$. Then, for any $t \in \mathbb{R}$, we have
\begin{align}
    \left|\mathbb{P}\left\{\frac{\sum_{k =1}^n\left(X_{k} - \mu_{k}\right)}{\sigma}\geq t\right\} - \mathcal{Q}\left(t\right)\right|\leq \frac{6\Theta}{\sigma^{3}},
\end{align}
where $\mathcal{Q}(\cdot)$ is the tail probability of the standard Gaussian distribution.
\end{theorem}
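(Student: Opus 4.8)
The plan is to follow the classical characteristic-function argument due to Esseen, since the stated inequality is the textbook Berry--Esseen bound for non-identically distributed summands. Writing $Y_k = (X_k - \mu_k)/\sigma$, the normalized sum is $S_n = \sum_{k=1}^n Y_k$, which has mean $0$ and variance $\sum_k \sigma_k^2/\sigma^2 = 1$. Let $F_n$ denote its CDF and $\Phi$ the standard Gaussian CDF; since $\mathcal{Q}(t) = 1-\Phi(t)$, the quantity to control is exactly $\sup_t |F_n(t) - \Phi(t)|$. The characteristic function of $S_n$ factors as $f_n(\xi) = \prod_{k=1}^n \varphi_k(\xi)$, where $\varphi_k$ is the characteristic function of $Y_k$, and the goal is to compare $f_n$ with $g(\xi) = e^{-\xi^2/2}$, the characteristic function of $\Phi$.

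First I would invoke Esseen's smoothing inequality: for any $T > 0$,
\begin{align*}
\sup_t |F_n(t) - \Phi(t)| \leq \frac{1}{\pi}\int_{-T}^{T}\left|\frac{f_n(\xi) - g(\xi)}{\xi}\right|d\xi + \frac{C_0}{T},
\end{align*}
where $C_0$ is an absolute constant arising from the uniform bound $\sup_x \Phi'(x) = 1/\sqrt{2\pi}$ on the Gaussian density. This reduces the problem to a pointwise estimate of $f_n(\xi) - g(\xi)$ over a finite frequency window.

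Next I would Taylor-expand each factor near the origin. Using $\mathbb{E}[Y_k]=0$, $\mathbb{E}[Y_k^2] = \sigma_k^2/\sigma^2$, and $\mathbb{E}[|Y_k|^3] = \theta_k/\sigma^3$, I obtain $\varphi_k(\xi) = 1 - \frac{\sigma_k^2}{2\sigma^2}\xi^2 + r_k(\xi)$ with $|r_k(\xi)| \leq \frac{\theta_k}{6\sigma^3}|\xi|^3$. Setting $h_k(\xi) = e^{-\sigma_k^2\xi^2/(2\sigma^2)}$, so that $g = \prod_k h_k$, I would use the telescoping comparison $\left|\prod_k \varphi_k - \prod_k h_k\right| \leq \sum_k |\varphi_k - h_k|\prod_{j\neq k}\max(|\varphi_j|,|h_j|)$ together with the exponential majorant $|\varphi_k(\xi)| \leq \exp\!\big(-\tfrac{\sigma_k^2}{2\sigma^2}\xi^2 + \tfrac{\theta_k}{6\sigma^3}|\xi|^3\big)$. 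On the range $|\xi| \leq c\,\sigma^3/\Theta$ for a suitable absolute constant $c$, this yields a bound of the form $|f_n(\xi) - g(\xi)| \leq C_1 \frac{\Theta}{\sigma^3}|\xi|^3 e^{-\xi^2/4}$.

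Finally I would choose the smoothing parameter $T$ proportional to $\sigma^3/\Theta$, so the integration window of the smoothing inequality lies inside the validity region of the pointwise estimate; substituting the $|\xi|^3 e^{-\xi^2/4}$ bound and integrating over $\xi$ produces a first term of order $\Theta/\sigma^3$, while the residual $C_0/T$ is likewise of order $\Theta/\sigma^3$. Tracking the constants then pins down the factor $6$ in the statement. The main obstacle is precisely this constant bookkeeping: the delicate part is verifying that the exponential majorant for $|\varphi_k|$ holds on the chosen window and that the telescoping product does not accumulate across the $n$ factors, since this is where the uniform-in-$n$ constant is fixed. The smoothing inequality and the Taylor expansion themselves are routine.
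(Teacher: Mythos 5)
This statement is not proved in the paper at all: it is an imported result, cited verbatim from Feller (Theorem 2, Chapter XVI.5), and used as a black box in the proofs of Theorem \ref{theorem:normal_approx_achievability} and Corollary \ref{corollary:normal_approx_general}. Your proposal is therefore not competing with any argument in the paper itself, but it does reproduce, in outline, exactly the proof found in the cited reference: Esseen's smoothing inequality to pass from CDF distance to an integral of characteristic-function differences, Taylor expansion of each factor with third-moment control of the remainder, a telescoping product comparison against the Gaussian factors $e^{-\sigma_k^2\xi^2/(2\sigma^2)}$, and a frequency cutoff $T \asymp \sigma^3/\Theta$. That architecture is sound and is the standard route to a bound of the form $C\,\Theta/\sigma^3$ uniform in $n$; your identification of the delicate points (validity of the exponential majorant on the window, and non-accumulation across factors in the telescoping step) is also accurate.

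The one place where your sketch overpromises is the final sentence: generic constant bookkeeping along the lines you describe does \emph{not} automatically ``pin down the factor $6$.'' A naive execution of smoothing plus telescoping yields an absolute constant noticeably larger than $6$; Feller reaches $6$ only through specific refinements (a sharper treatment of the smoothing kernel and of the characteristic-function estimates near the cutoff). Since the inequality as stated in the paper carries the explicit constant $6$, a complete proof must either carry out those refinements or settle for a weaker constant and note that the paper's downstream use (where the Berry--Esseen term is absorbed into $\mathcal{O}(1/\sqrt{n})$ and $\mathcal{O}(\log n / n)$ corrections) is insensitive to the value of the constant. With that caveat made explicit, your proposal is a faithful sketch of the cited proof.
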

When all $X_{k}$'s are also identically distributed and we set $\theta_k = \theta_0$ and $\sigma_k^2 = \sigma_0^2$ for all $k$, then the RHS of Theorem \ref{theorem:berry_esseen} becomes $\frac{6\theta_0}{\sigma_0^3\sqrt{n}}$, i.e. the error term decays as $1/\sqrt{n}$.

We now state the normal approximation result for the maximum code rate of learning-based channel codes. 

\begin{theorem}\label{theorem:normal_approx_achievability}
    
   Assume that $m$ and $n$ are sufficiently large, and it holds that 
\begin{align}\label{eq:blocklength_condition}
    n \leq \sqrt{\frac{m}{(|\mathcal{X}||\mathcal{Y}|-1)\ln (m+1) - \ln(\delta)}}.
   \end{align}
 If $\hat{P}_{Y|X}$ is not an exotic DMC\footnote{The precise definition of an exotic channel is provided in \cite[Section IV]{Polyanskiy:2010}. As emphasized in the example given in \cite[Appendix H]{Polyanskiy:2010}, conditions for exotic channels are difficult to satisfy.}, then the maximum code rate $R^*$ of a $(2^{nR},n,\epsilon,\delta)$ learning-based channel code is 
\begin{align}\label{eq:rate_normal}
    R^* = & C_{\hat{P}} - \sqrt{\frac{V_{\hat{P}}^{\epsilon}}{n}}\mathcal{Q}^{-1}\left(\epsilon\right) + \mathcal{O}\left(\frac{\log n}{n}\right),
    \end{align}
where $C_{\hat{P}} = \max_{P_{X}}I(P_{X},\hat{P}_{Y|X})$ is the channel capacity of $\hat{P}_{Y|X}$ and $V_{\hat{P}}^{\epsilon}$ is the channel dispersion of $\hat{P}_{Y|X}$ which can be written as
\begin{align}\label{eq:channel_dispersion}
    V_{\hat{P}}^{\epsilon} \triangleq \begin{cases}\max_{P_{X}\in\mathcal{\hat{P}}_X^*}\mathrm{Var}[i_{\hat{P}}(X,Y)|X],\quad \epsilon > \frac{1}{2}\\
    \;\min_{P_{X}\in\mathcal{\hat{P}}_X^*}\mathrm{Var}[i_{\hat{P}}(X,Y)|X],\quad \epsilon < \frac{1}{2}
    \end{cases}.
\end{align}
The optimization of the channel dispersion is over $\mathcal{\hat{P}}_X^*$, the set of all input distributions $P_{X}$ that achieve the capacity $C_{\hat{P}}$.
\end{theorem}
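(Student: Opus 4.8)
The plan is to establish matching achievability and converse expansions, each of which I reduce to the standard finite-blocklength normal approximation for the \emph{fixed} DMC $\hat{P}_{Y|X}$, after showing that the learning penalty is negligible under the blocklength condition \eqref{eq:blocklength_condition}. Throughout I abbreviate $A \triangleq (|\mathcal{X}||\mathcal{Y}|-1)\ln(m+1) - \ln\delta$, so that \eqref{eq:blocklength_condition} is exactly $n^2 A \leq m$, equivalently $nA/m \leq 1/n$. For the achievability direction I would specialize Theorem \ref{theorem:RCU_with_learning} to $n_0 = n$, so that $L = 1$ and $\lceil 2^{nR/L}\rceil^L - 1 = M-1$ with $M = 2^{nR}$. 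The first term of \eqref{eq:RCU_with_learning} then becomes precisely the RCU bound $\mathbb{E}_{P_X^n\hat{P}_{Y|X}^n}[\min\{1,(M-1)2^{-i_{\hat{P}}(X^n,Y^n)}\}]$ for the DMC $\hat{P}_{Y|X}$ at blocklength $n$, while the second term is the penalty $\kappa = \sqrt{1-\exp(-nA/m)}$. Using $1-e^{-x}\leq x$ together with $nA/m \leq 1/n$ gives $\kappa \leq \sqrt{1-e^{-1/n}} \leq 1/\sqrt{n}$, hence $\kappa = \mathcal{O}(1/\sqrt{n})$.

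Consequently the RCU term need only be driven below the effective target $\epsilon - \kappa$. Since $\hat{P}_{Y|X}$ is non-exotic and the alphabets are finite (so the second and third central moments of $i_{\hat{P}}(X,Y)$ are finite), I would apply the Berry-Esseen theorem (Theorem \ref{theorem:berry_esseen}) to the i.i.d. sum $i_{\hat{P}}(X^n,Y^n) = \sum_{k=1}^n i_{\hat{P}}(X_k,Y_k)$ exactly as in the classical normal approximation of the RCU bound, obtaining $\log M \geq n C_{\hat{P}} - \sqrt{n V_{\hat{P}}^{\epsilon-\kappa}}\,\mathcal{Q}^{-1}(\epsilon-\kappa) + \mathcal{O}(\log n)$. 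The remaining step is to absorb the $\kappa$-perturbation: because $\kappa \to 0$, for large $n$ the argument $\epsilon-\kappa$ lies on the same side of $\tfrac12$ as $\epsilon$, so $V_{\hat{P}}^{\epsilon-\kappa} = V_{\hat{P}}^{\epsilon}$ by \eqref{eq:channel_dispersion}; and a first-order Taylor expansion of the smooth function $\mathcal{Q}^{-1}$ around the fixed $\epsilon$ gives $\mathcal{Q}^{-1}(\epsilon-\kappa) = \mathcal{Q}^{-1}(\epsilon) + \mathcal{O}(\kappa) = \mathcal{Q}^{-1}(\epsilon) + \mathcal{O}(1/\sqrt{n})$. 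Dividing by $n$ and noting $\sqrt{V_{\hat{P}}^{\epsilon}/n}\cdot\mathcal{O}(1/\sqrt{n}) = \mathcal{O}(1/n)$, the perturbation folds into $\mathcal{O}(\log n / n)$, yielding the lower bound $R^* \geq C_{\hat{P}} - \sqrt{V_{\hat{P}}^{\epsilon}/n}\,\mathcal{Q}^{-1}(\epsilon) + \mathcal{O}(\log n / n)$.

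For the converse I would run the mirror-image argument on Theorem \ref{theorem:metaconverse_learning}. The effective error is now $\epsilon+\kappa$ with the identical $\kappa = \mathcal{O}(1/\sqrt{n})$, so for large $n$ the subscript satisfies $\max\{0,1-\epsilon-\kappa\} = 1-\epsilon-\kappa > 0$ and the bound is non-vacuous. Choosing the auxiliary output distribution $Q_{Y^n}$ to be the capacity-achieving output product distribution of $\hat{P}_{Y|X}$ and applying the standard metaconverse normal approximation for the non-exotic DMC $\hat{P}_{Y|X}$ (again via Berry-Esseen on the information density) gives $\log M \leq n C_{\hat{P}} - \sqrt{n V_{\hat{P}}^{\epsilon+\kappa}}\,\mathcal{Q}^{-1}(\epsilon+\kappa) + \mathcal{O}(\log n)$. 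The same dispersion-continuity and Taylor/absorption step produces the matching upper bound $R^* \leq C_{\hat{P}} - \sqrt{V_{\hat{P}}^{\epsilon}/n}\,\mathcal{Q}^{-1}(\epsilon) + \mathcal{O}(\log n / n)$, and combining the two directions yields \eqref{eq:rate_normal}.

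The hard part will not be the learning penalty, which the blocklength condition cleanly renders $\mathcal{O}(1/\sqrt{n})$ and hence harmless, but rather importing the $\mathcal{O}(\log n / n)$-accurate normal approximation for the fixed DMC $\hat{P}_{Y|X}$ in both directions. This is exactly where the non-exotic hypothesis is essential: it guarantees that the achievability and converse third-order terms agree up to $\mathcal{O}(\log n / n)$ and that the dispersion is characterized by \eqref{eq:channel_dispersion}, whereas for exotic channels this matching can fail. Additional care is required to ensure the Taylor expansion of $\mathcal{Q}^{-1}$ is uniform over the shrinking interval $[\epsilon-\kappa,\epsilon+\kappa]$; since $\epsilon$ is fixed and bounded away from $0$, $\tfrac12$, and $1$, all estimates hold once $m$ and $n$ are sufficiently large, which is precisely the regime assumed in the statement.
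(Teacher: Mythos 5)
Your proposal is correct and follows essentially the same route as the paper's own proof: specialize Theorem \ref{theorem:RCU_with_learning} to $n_0=n$, use condition \eqref{eq:blocklength_condition} to force the learning penalty $\kappa=\mathcal{O}(1/\sqrt{n})$, apply the Berry--Esseen normal approximation to the RCU term and to the metaconverse of Theorem \ref{theorem:metaconverse_learning} for the fixed non-exotic DMC $\hat{P}_{Y|X}$, and absorb the $\kappa$-perturbation by a Taylor expansion of $\mathcal{Q}^{-1}$ into the $\mathcal{O}(\log n/n)$ term. The only cosmetic differences are that the paper splits the error budget explicitly as $\epsilon-1/\sqrt{n}$ plus $1/\sqrt{n}$ (rather than $\epsilon-\kappa$) and handles the $\min\{1,\cdot\}$ via the auxiliary uniform random variable $U$, which is exactly the classical RCU treatment you invoke.
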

\begin{proof}
    See Appendix \ref{proof:normal_approx_achievability}.
\end{proof}

The quantities $C_{\hat{P}}$ and $V_{\hat{P}}^{\epsilon}$ involve optimization of the input distribution $P_{X}$. The learning algorithm first solves for $\mathcal{\hat{P}}_{X}^*$, the set of all capacity-achieving input distributions\footnote{There may be multiple capacity-achieving input distributions that lead to the unique capacity-achieving output distribution.} for the empirical channel $\hat{P}_{Y|X}$. The learning algorithm then picks the distribution from the set $\mathcal{P}_{X}^*$ which maximizes (resp. minimizes) $\mathrm{Var}[i(X,Y)|X]$ when $\epsilon > \frac{1}{2}$ (resp. when $\epsilon < \frac{1}{2}$). The solution, which we denote as $\hat{P}_{X}^*$, is then used to construct the codebook. 

The result in Theorem \ref{theorem:normal_approx_achievability} has a striking resemblance with the normal approximation result in \cite[Section IV]{Polyanskiy:2010}. The difference is that we replace the channel capacity $C$ and the channel dispersion $V^{\epsilon}$ of the channel $P_{Y|X}$ with those of the empirical channel $P_{Y|X}$. It is known that the empirical DMC $\hat{P}_{Y|X}$ converges to $P_{Y|X}$ with probability 1 as $m$ grows to infinity (see Glivenko–Cantelli theorem in \cite{Tucker:1959}). However, we note that $C_{\hat{P}}$ and $V_{\hat{P}}^{\epsilon}$ do not necessarily converge to the channel capacity and the channel dispersion of the true channel $P_{Y|X}$, respectively. To the best of the authors' knowledge, there are no existing results showing that the channel capacity and channel dispersion of the empirical channel law converge to those of the true channel law as the sample size $m$ grows to infinity.

It is also crucial to point out that Theorem \ref{theorem:normal_approx_achievability} holds when $m \gg n$ (as observed in condition \eqref{eq:blocklength_condition}). In fact, $m$ is required to grow superlinearly with respect to $n$. Hence, Theorem \ref{theorem:normal_approx_achievability} cannot be adapted to scenarios where the learning phase is done during transmission since there are not enough channel uses to facilitate both learning phase and communication phase. The super linear growth requirement in $m$ is an artefact of our high probability bound for $\mathrm{TV}(\hat{P}_{Y|X}^n,P_{Y|X}^n)$. It remains unclear if this scaling of the training set size can be improved if we insist on obtaining bounds that depend entirely on the training data and not on the unknown channel distribution. Moreover, it can be observed that the maximum code rate in \eqref{eq:rate_normal} does not depend on the confidence parameter $\delta$, implying that this rate expression holds for any $\delta$. This is because we are enforcing condition \eqref{eq:blocklength_condition} to hold in Theorem \ref{theorem:normal_approx_achievability}. Hence, $m$ is large enough to accommodate the high probability requirement.

One major disadvantage of Theorem \ref{theorem:normal_approx_achievability} is the stringent requirement on the training size. The following corollary gives a lower bound on the maximum code rate of learning-based channel codes when condition \eqref{eq:blocklength_condition} is not met.
\begin{corollary}\label{corollary:normal_approx_general}  
    Assume $m$ and an integer $n_0 < n$ are sufficiently large, and it holds that
    \begin{align}\label{eq:blocklength_condition2}
        n_0 \leq \sqrt{\frac{m}{(|\mathcal{X}||\mathcal{Y}|-1)\ln (m+1) - \ln(\delta)}}.
    \end{align}
    If condition \eqref{eq:blocklength_condition} is not met and $\hat{P}_{Y|X}$ is not an exotic DMC,
   then there exists a $(2^{nR},n,\epsilon,\delta)$ learning-based channel code for which the code rate is
    \begin{align}\label{eq:rate_normal_n0}
    R_{\mathrm{achievable}}^* =& \frac{n_0 C_{\hat{P}}}{n} - \sqrt{\frac{n_0V_{\hat{P}}^{\epsilon}}{n^2}}\mathcal{Q}^{-1}\left(\epsilon\right)+ \mathcal{O}\left(\frac{1}{n}\right).
\end{align}
\end{corollary}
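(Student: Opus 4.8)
The plan is to specialize the achievability bound of Theorem~\ref{theorem:RCU_with_learning} to the fixed sub-blocklength $n_0$ prescribed by condition~\eqref{eq:blocklength_condition2}, and then to run the normal-approximation argument of Theorem~\ref{theorem:normal_approx_achievability} at blocklength $n_0$ rather than at $n$. When condition~\eqref{eq:blocklength_condition} fails, the penalty term in \eqref{eq:RCU_with_learning} is too large to permit the choice $n_0=n$; the idea is therefore to trade rate for reliability by transmitting $L=\lfloor n/n_0\rfloor$ independent copies of a short mini-codebook of blocklength $n_0$, which is precisely the construction underlying Theorem~\ref{theorem:RCU_with_learning}. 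First I would drop the minimization in \eqref{eq:RCU_with_learning} and retain only the prescribed $n_0$; this is still a valid upper bound on $\epsilon^*$. Condition~\eqref{eq:blocklength_condition2} yields $n_0^2\bigl[(|\mathcal{X}||\mathcal{Y}|-1)\ln(m+1)-\ln\delta\bigr]\leq m$, so the second (penalty) term of \eqref{eq:RCU_with_learning}, call it $\kappa_{n_0}$, obeys $\kappa_{n_0}\leq\sqrt{1-\exp(-1/n_0)}\leq 1/\sqrt{n_0}$. Forcing the total error to be at most $\epsilon$ then reduces to forcing the first (RCU-type) term to be at most $\epsilon-\kappa_{n_0}$.

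Next I would analyze the RCU-type term asymptotically. Because $P_X$ is a product distribution and $\hat{P}_{Y|X}$ is memoryless, the information density factorizes as $i_{\hat{P}}(X^{n_0},Y^{n_0})=\sum_{k=1}^{n_0}i_{\hat{P}}(X_k,Y_k)$, a sum of $n_0$ i.i.d.\ terms. Selecting $P_X$ to be the capacity-achieving input that extremizes the conditional variance (maximizing when $\epsilon>\tfrac12$ and minimizing when $\epsilon<\tfrac12$) identifies the mean $C_{\hat{P}}$ and the dispersion $V_{\hat{P}}^{\epsilon}$ exactly as in \eqref{eq:channel_dispersion}. Applying the Berry--Esseen CLT of Theorem~\ref{theorem:berry_esseen} to this sum, with error $\mathcal{O}(1/\sqrt{n_0})$, and reusing the threshold computation from the proof of Theorem~\ref{theorem:normal_approx_achievability}, I would show that the largest admissible value of the effective multiplier $L(\lceil 2^{nR/L}\rceil^L-1)\approx L\cdot 2^{nR}$ satisfies
\begin{align*}
\log L + nR = n_0 C_{\hat{P}} - \sqrt{n_0 V_{\hat{P}}^{\epsilon}}\,\mathcal{Q}^{-1}(\epsilon-\kappa_{n_0}) + \mathcal{O}(\log n_0).
\end{align*}
The non-exotic assumption on $\hat{P}_{Y|X}$ is exactly what guarantees that this threshold is governed by $C_{\hat{P}}$ and $V_{\hat{P}}^{\epsilon}$ with the standard logarithmic correction, just as in Theorem~\ref{theorem:normal_approx_achievability}.

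Finally I would solve for $R$ and collect the lower-order terms. Dividing by $n$ and using $L n_0\leq n$ produces the leading pair $\tfrac{n_0 C_{\hat{P}}}{n}-\tfrac{\sqrt{n_0 V_{\hat{P}}^{\epsilon}}}{n}\mathcal{Q}^{-1}(\epsilon-\kappa_{n_0})$. The reduction $\kappa_{n_0}=\mathcal{O}(1/\sqrt{n_0})$ together with the smoothness of $\mathcal{Q}^{-1}$ near $\epsilon$ gives $\mathcal{Q}^{-1}(\epsilon-\kappa_{n_0})=\mathcal{Q}^{-1}(\epsilon)+\mathcal{O}(1/\sqrt{n_0})$, so that $\tfrac{\sqrt{n_0 V_{\hat{P}}^{\epsilon}}}{n}\mathcal{Q}^{-1}(\epsilon-\kappa_{n_0})=\sqrt{n_0 V_{\hat{P}}^{\epsilon}/n^2}\,\mathcal{Q}^{-1}(\epsilon)+\mathcal{O}(1/n)$, recovering the form of \eqref{eq:rate_normal_n0}. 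The main obstacle is the bookkeeping introduced by the sub-blocking factor $L$: one must verify that the count of competing codewords entering the RCU threshold is the full $\approx 2^{nR}$ rather than the per-block count $\lceil 2^{nR/L}\rceil$, which is what makes the leading term the \emph{reduced} rate $\tfrac{n_0}{n}C_{\hat{P}}$ instead of $C_{\hat{P}}$, and then confirm that the residual $\log L$ and $\log n_0$ contributions are controlled within the stated remainder. The other delicate point is ensuring that the Berry--Esseen error $6\Theta/\sigma^3=\mathcal{O}(1/\sqrt{n_0})$ and the penalty $\kappa_{n_0}$ are simultaneously negligible, which is exactly what the largeness of $n_0$ (and hence of $m$ through \eqref{eq:blocklength_condition2}) supplies.
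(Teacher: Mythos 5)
Your proposal follows essentially the same route as the paper's proof: fix the prescribed $n_0$ in the bound of Theorem~\ref{theorem:RCU_with_learning} (dropping the minimization), bound the penalty term by $1/\sqrt{n_0}$ via condition~\eqref{eq:blocklength_condition2}, apply the Berry--Esseen CLT to the sum of $n_0$ i.i.d.\ information densities against the threshold $\log L + nR$ plus small corrections, and solve for $R$ with a Taylor expansion of $\mathcal{Q}^{-1}(\cdot)$. The only cosmetic difference is bookkeeping: the paper carries the correction $-\frac{L+\log L}{n}$ explicitly before absorbing it into $\mathcal{O}(1/n)$, whereas you fold the same quantities into the threshold remainder, so the substance is identical.
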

\begin{proof}
   See Appendix \ref{proof:normal_approx_general}.
\end{proof}
When condition \eqref{eq:blocklength_condition} is not met, we can instead look for an $n_0 < n$ that satisfies \eqref{eq:blocklength_condition2}. If this $n_0$ is sufficiently large, then we get the achievability result in Corollary \ref{corollary:normal_approx_general}. The factors $\frac{n_0}{n}$ and $\sqrt{\frac{n_0}{n}}$ are multiplied to the empirical capacity $C_{\hat{P}}$ and empirical channel dispersion $V_{\hat{P}}^{\epsilon}$, respectively.

%A factor $\frac{n_0}{n}$ is multiplied to the empirical capacity and empirical channel dispersion terms.

As a final remark, Theorem \ref{theorem:normal_approx_achievability} and Corollary \ref{corollary:normal_approx_general} did not specify exact values of what \emph{sufficiently large} $n$ and \emph{sufficiently large} $n_0$ should be. Based on the proofs, we must have $n \gtrsim \epsilon^{-2}$ and $n_0 \gtrsim \epsilon^{-2}$ for Theorem \ref{theorem:normal_approx_achievability} and Corollary \ref{corollary:normal_approx_general}, respectively.

\section{Summary and Future Work}
\label{section:summary}
%\textcolor{red}{*insert conclusion*}
In this work, we introduced a class of channel codes, which we coined learning-based channel codes, to facilitate a data-driven approach in designing channel codes for model-deficit communication problems. This class of channel codes includes a learning algorithm which maps the training set to a channel encoder-decoder pair. The channel encoder-decoder pair should satisfy the statistical reliability constraint in \eqref{eq:statistical_reliability}. We established achievability and converse bounds for the maximum achievable rate $R^*$ of the learning-based channel codes for given blocklength $n$, training size $m$, target error probability $\epsilon$, and confidence parameter $\delta$ over a point-to-point DMC with unknown channel law. The bounds are data-dependent and do not require complete information of the underlying channel law to evaluate. For sufficiently large $m$, the established bounds converge to the maximum achievable rate of the empirical channel law. 

An important direction for future research is to establish tighter bounds for training set with small to moderate size. Currently, our asymptotic result requires the size of the training set to be much larger than the blocklength, which can be impractical. Moreover, the current result only holds for the discrete memoryless case. It is also of interest to extend the result to channels with continuous input and output alphabets as well as data-driven communication systems with feedback. Lastly, more refined asymptotic results are possible by using saddlepoint methods \cite{Polyanskiy:2012,Font-Segura:2018}.

\section*{Acknowledgements}\label{section-acknowledgement}
This work has received funding from the Australian Research Council under project DE210101497. N.I. Bernardo acknowledges the Melbourne Research Scholarship of the University of Melbourne and the Department of Science and Technology-Engineering Research and Development for Technology (DOST-ERDT) Faculty Development Fund of the Republic of the Philippines for sponsoring his doctoral studies.

\enlargethispage{-1.4cm} 
%%%%%%
\bibliographystyle{IEEEtran}
\bibliography{references}

% Generated by IEEEtran.bst, version: 1.14 (2015/08/26)
\begin{thebibliography}{10}
\providecommand{\url}[1]{#1}
\csname url@samestyle\endcsname
\providecommand{\newblock}{\relax}
\providecommand{\bibinfo}[2]{#2}
\providecommand{\BIBentrySTDinterwordspacing}{\spaceskip=0pt\relax}
\providecommand{\BIBentryALTinterwordstretchfactor}{4}
\providecommand{\BIBentryALTinterwordspacing}{\spaceskip=\fontdimen2\font plus
\BIBentryALTinterwordstretchfactor\fontdimen3\font minus
  \fontdimen4\font\relax}
\providecommand{\BIBforeignlanguage}[2]{{%
\expandafter\ifx\csname l@#1\endcsname\relax
\typeout{** WARNING: IEEEtran.bst: No hyphenation pattern has been}%
\typeout{** loaded for the language `#1'. Using the pattern for}%
\typeout{** the default language instead.}%
\else
\language=\csname l@#1\endcsname
\fi
#2}}
\providecommand{\BIBdecl}{\relax}
\BIBdecl

\bibitem{Durisi:2016}
G.~Durisi, T.~Koch, and P.~Popovski, ``Toward massive, ultrareliable, and
  low-latency wireless communication with short packets,'' \emph{Proceedings of
  the IEEE}, vol. 104, no.~9, pp. 1711--1726, 2016.

\bibitem{Polyanskiy:2010}
Y.~Polyanskiy, H.~V. Poor, and S.~Verdu, ``Channel coding rate in the finite
  blocklength regime,'' \emph{IEEE Transactions on Information Theory},
  vol.~56, no.~5, pp. 2307--2359, 2010.

\bibitem{Liva:2017}
G.~Liva, G.~Durisi, M.~Chiani, S.~S. Ullah, and S.~C. Liew, ``Short codes with
  mismatched channel state information: A case study,'' in \emph{2017 IEEE 18th
  International Workshop on Signal Processing Advances in Wireless
  Communications (SPAWC)}, 2017, pp. 1--5.

\bibitem{Ostman:2019}
J.~Östman, G.~Durisi, E.~G. Ström, M.~C. Coşkun, and G.~Liva, ``Short
  packets over block-memoryless fading channels: Pilot-assisted or noncoherent
  transmission?'' \emph{IEEE Transactions on Communications}, vol.~67, no.~2,
  pp. 1521--1536, 2019.

\bibitem{Scarlett:2014}
J.~Scarlett, A.~Martinez, and A.~G.~i. Fabregas, ``Mismatched decoding: Error
  exponents, second-order rates and saddlepoint approximations,'' \emph{IEEE
  Transactions on Information Theory}, vol.~60, no.~5, pp. 2647--2666, 2014.

\bibitem{Simeone:2018}
O.~Simeone, ``A very brief introduction to machine learning with applications
  to communication systems,'' \emph{IEEE Transactions on Cognitive
  Communications and Networking}, vol.~4, no.~4, pp. 648--664, 2018.

\bibitem{Cammerer:2020}
S.~Cammerer, F.~A. Aoudia, S.~Dörner, M.~Stark, J.~Hoydis, and S.~ten Brink,
  ``Trainable communication systems: Concepts and prototype,'' \emph{IEEE
  Transactions on Communications}, vol.~68, no.~9, pp. 5489--5503, 2020.

\bibitem{Bjornson:2020}
E.~Bjornson and P.~Giselsson, ``Two applications of deep learning in the
  physical layer of communication systems [lecture notes],'' \emph{IEEE Signal
  Processing Magazine}, vol.~37, no.~5, pp. 134--140, 2020.

\bibitem{Shlezinger:2021}
\BIBentryALTinterwordspacing
N.~Shlezinger, N.~Farsad, Y.~C. Eldar, and A.~J. Goldsmith, ``Model-based
  machine learning for communications,'' 2021. [Online]. Available:
  \url{https://arxiv.org/abs/2101.04726}
\BIBentrySTDinterwordspacing

\bibitem{Shen:2022}
Y.~Shen, J.~Zhang, and K.~B. Letaief, ``How neural architectures affect deep
  learning for communication networks?'' in \emph{ICC 2022 - IEEE International
  Conference on Communications}, 2022, pp. 389--394.

\bibitem{Letizia:2021}
N.~A. Letizia and A.~M. Tonello, ``Capacity-driven autoencoders for
  communications,'' \emph{IEEE Open Journal of the Communications Society},
  vol.~2, pp. 1366--1378, 2021.

\bibitem{Jiang:2019}
Y.~Jiang, H.~Kim, H.~Asnani, S.~Kannan, S.~Oh, and P.~Viswanath, ``Turbo
  autoencoder: Deep learning based channel codes for point-to-point
  communication channels,'' in \emph{Advances in Neural Information Processing
  Systems}, 2019, pp. 2754--2764.

\bibitem{Jamali:2022}
M.~V. Jamali, H.~Saber, H.~Hatami, and J.~H. Bae, ``{ProductAE: Toward Training
  Larger Channel Codes based on Neural Product Codes},'' in \emph{ICC 2022 -
  IEEE International Conference on Communications}, 2022, pp. 3898--3903.

\bibitem{Hebbar:2022}
S.~A. Hebbar, R.~K. Mishra, S.~K. Ankireddy, A.~V. Makkuva, H.~Kim, and
  P.~Viswanath, ``{TinyTurbo: Efficient Turbo Decoders on Edge},'' in
  \emph{2022 IEEE International Symposium on Information Theory (ISIT)}, 2022,
  pp. 2797--2802.

\bibitem{Angjelichinoski:2019}
M.~Angjelichinoski, K.~F. Trillingsgaard, and P.~Popovski, ``A statistical
  learning approach to ultra-reliable low latency communication,'' \emph{IEEE
  Transactions on Communications}, vol.~67, no.~7, pp. 5153--5166, 2019.

\bibitem{Weinberger:2022}
N.~Weinberger, ``Generalization bounds and algorithms for learning to
  communicate over additive noise channels,'' \emph{IEEE Transactions on
  Information Theory}, vol.~68, no.~3, pp. 1886--1921, 2022.

\bibitem{Tsvieli:2022}
A.~Tsvieli and N.~Weinberger, ``Learning maximum margin channel decoders for
  non-linear gaussian channels,'' in \emph{2022 IEEE International Symposium on
  Information Theory (ISIT)}, 2022, pp. 2469--2474.

\bibitem{NetworkInfoTheory:2011}
A.~El~Gamal and Y.-H. Kim, \emph{Network Information Theory}.\hskip 1em plus
  0.5em minus 0.4em\relax Cambridge University Press, 2011.

\bibitem{Ohnishi:2021}
\BIBentryALTinterwordspacing
Y.~Ohnishi and J.~Honorio, ``{Novel Change of Measure Inequalities with
  Applications to PAC-Bayesian Bounds and Monte Carlo Estimation},'' in
  \emph{Proceedings of The 24th International Conference on Artificial
  Intelligence and Statistics}, ser. Proceedings of Machine Learning Research,
  A.~Banerjee and K.~Fukumizu, Eds., vol. 130.\hskip 1em plus 0.5em minus
  0.4em\relax PMLR, 13--15 Apr 2021, pp. 1711--1719. [Online]. Available:
  \url{https://proceedings.mlr.press/v130/ohnishi21a.html}
\BIBentrySTDinterwordspacing

\bibitem{Seldin:2009}
\BIBentryALTinterwordspacing
Y.~Seldin and N.~Tishby, ``{PAC-Bayesian Generalization Bound for Density
  Estimation with Application to Co-clustering},'' in \emph{Proceedings of the
  Twelth International Conference on Artificial Intelligence and Statistics},
  ser. Proceedings of Machine Learning Research, D.~van Dyk and M.~Welling,
  Eds., vol.~5.\hskip 1em plus 0.5em minus 0.4em\relax Hilton Clearwater Beach
  Resort, Clearwater Beach, Florida USA: PMLR, 16--18 Apr 2009, pp. 472--479.
  [Online]. Available: \url{https://proceedings.mlr.press/v5/seldin09a.html}
\BIBentrySTDinterwordspacing

\bibitem{Bretagnolle:1978}
J.~Bretagnolle and C.~Huber, ``Estimation des densit{\'e}s : Risque minimax,''
  in \emph{S{\'e}minaire de Probabilit{\'e}s XII}, C.~Dellacherie, P.~A. Meyer,
  and M.~Weil, Eds.\hskip 1em plus 0.5em minus 0.4em\relax Berlin, Heidelberg:
  Springer Berlin Heidelberg, 1978, pp. 342--363.

\bibitem{Feller:1957}
W.~Feller, \emph{\BIBforeignlanguage{English}{An introduction to probability
  theory and its applications}}, 2nd~ed.\hskip 1em plus 0.5em minus 0.4em\relax
  Wiley New York, 1957.

\bibitem{Tucker:1959}
\BIBentryALTinterwordspacing
H.~G. Tucker, ``{A Generalization of the Glivenko-Cantelli Theorem},''
  \emph{The Annals of Mathematical Statistics}, vol.~30, no.~3, pp. 828 -- 830,
  1959. [Online]. Available: \url{https://doi.org/10.1214/aoms/1177706212}
\BIBentrySTDinterwordspacing

\bibitem{Polyanskiy:2012}
Y.~Polyanskiy, ``Saddle point in the minimax converse for channel coding,''
  \emph{IEEE Transactions on Information Theory}, vol.~59, no.~5, pp.
  2576--2595, 2013.

\bibitem{Font-Segura:2018}
J.~Font-Segura, G.~Vazquez-Vilar, A.~Martinez, A.~Guillén~i Fàbregas, and
  A.~Lancho, ``Saddlepoint approximations of lower and upper bounds to the
  error probability in channel coding,'' in \emph{2018 52nd Annual Conference
  on Information Sciences and Systems (CISS)}, 2018, pp. 1--6.

\end{thebibliography}
\newpage
% \;
% \newpage

\begin{appendices}
\section{Proof of Theorem \ref{theorem:RCU_with_learning}}\label{proof:achievability}
%%%%%%%%%%%%%%%%%%%%%%%%%%%%
The proof modifies the RCU bound to incorporate the learning aspect of the communication system.

% \begin{figure*}[!t]
% \begin{align}
%     \epsilon^*\leq & \mathbb{E}_{P_{Y|X}^{n_0}P_{X}^{n_0}}\left[\min\left\{1,(M_0^L-1)\mathbb{P}\left(\bigcup_{l = 1}^{L}\left\{\hat{P}_{Y|X}^{n_0}\left(\tilde{X}^{(l)},Y^{(l)} \right) \geq  \hat{P}_{Y|X}^{n_0}\left(X^{(l)},Y^{(l)}\right)\right\}\bigg| X^{n_0L},Y^{n_0L}\right)\right\}\right]\label{eq:derivation1}\\
%     \leq & \mathbb{E}_{P_{Y|X}^{n_0}P_{X}^{n_0}}\left[\min\left\{1,(M_0^L-1)\sum_{l = 1}^{L}\mathbb{P}\left(\hat{P}_{Y|X}^{n_0}\left(\tilde{X}^{(l)},Y^{(l)} \right) \geq  \hat{P}_{Y|X}^{n_0}\left(X^{(l)},Y^{(l)}\right)\big| X^{(l)},Y^{(l)}\right)\right\}\right]\label{eq:derivation2}\\
%     \leq & \mathbb{E}_{P_{Y|X}^{n_0}P_{X}^{n_0}}\left[\min\left\{1,(M_0^L-1)L\mathbb{P}\left(\hat{P}_{Y|X}^{n_0}\left(Y^{n_0}|\tilde{X}^{n_0} \right) \geq  \hat{P}_{Y|X}^{n_0}\left(Y^{n_0}|X^{n_0}\right)\bigg| X^{n_0},Y^{n_0}\right)\right\}\right]\label{eq:derivation3}\\
%     \leq & \mathbb{E}_{P_{Y|X}^{n_0}P_{X}^{n_0}}\left[\min\left\{1,(M_0^L-1)L\cdot2^{-i_{\hat{P}}(X^{n_0},Y^{n_0})}\right\}\right]\label{eq:derivation4}\\
%     \leq & \mathbb{E}_{\hat{P}_{Y|X}^{n_0}P_{X}^{n_0}}\left[\min\left\{1,(M_0^L-1)L\cdot2^{-i_{\hat{P}}(X^{n_0},Y^{n_0})}\right\}\right] +\mathrm{TV}\left(\hat{P}_{Y|X}^{n_0}P_{X}^{n_0},P_{Y|X}^{n_0}P_{X}^{n_0}\right)\label{eq:derivation5}\\
%     \leq & \mathbb{E}_{\hat{P}_{Y|X}^{n_0}P_{X}^{n_0}}\left[\min\left\{1,(M_0^L-1)L\cdot2^{-i_{\hat{P}}(X^{n_0},Y^{n_0})}\right\}\right] +\sqrt{1-e^{-n_0\mathrm{KL}\left(\hat{P}_{Y|X}P_{X}||P_{Y|X}P_{X}\right)}}\label{eq:derivation6}
% \end{align}
% \hrulefill
% \end{figure*}
\textbf{Codebook Generation:} Pick an integer $n_0 \in [1,n]$ and let $L = \left\lfloor n/n_0\right\rfloor$. Let $M_0 = \left\lceil2^{bR}\right\rceil$, where $b = n/L$. With this choice of $b$, it is easy to verify that $M_0^L \geq M = 2^{nR}$. We first construct a mini codebook $\mathcal{C}_{M_0,n_0}$. We randomly generate $M_0$ $n_0$-length sub-codeword according to the distribution $P_{X}^{n_0} = \prod_{i = 1}^{n_0}P_{X}(x_i)$. Each sub-codeword is mapped to an index set $\{1,\cdots,M_0\}$ to form the mini codebook $\mathcal{C}_{M_0,n_0}$. The codebook $\mathcal{C}_{M_0^L,n_0L}$ is then the $L$-th order extension of $\mathcal{C}_{M_0,n_0}$. Since $n_0 L \leq n$, the code rate of $\mathcal{C}_{M_0^L,n_0L}$ is higher than $\mathcal{C}_{M,n}$. In the subsequent, we analyze this code.

\textbf{Encoder Design:} Apply the mapping $ w \mapsto X^{n_0L}(w)$, where $X^{n_0L}(w)$ is the $w$-th codeword of $\mathcal{C}_{M_0^L,n_0L}$.

\textbf{Decoder Design:} Using $\mathcal{D}_{m}$, the decoder learns an empirical version of the maximum likelihood (ML) decoding rule for the mini codebook, denoted $\hat{P}_{\mathrm{ML}}^{n_0}(\cdot,*)$. This decoding rule is applied $L$ times to decode the received sequence $y^n$. Denote $x^{(l)}$ to be the $n_0$-length subsequence at the $l$-th extension, i.e. $x^{(l)} = [x_{1 + n_0(l-1)},\cdots, x_{n_0 + n_0(l-1)}]$, and use the same notation for $y^{(l)}$. The learned decoding rule works as follows:
\begin{itemize}
\item\textbf{Step 1:} Apply $\hat{P}_{\mathrm{ML}}^{n_0}(\cdot,*)$ to each sub block:
\begin{align}\label{eq:empirical_ML}
    \hat{x}^{(l)} =& %\underset{x^{(l)}}{\arg\max} 
    \;\;
    \hat{P}_{\mathrm{ML}}^{n_0}(x^{(l)},y^{(l)})\nonumber\\
    =& \underset{x^{(l)}}{\arg\max}\; \prod_{i = 1}^{n_0} \hat{P}_{Y|X}(y_{i+n_0(l-1)}|x_{i+n_0(l-1)})
    %\qquad\forall l\in[1,L],
\end{align}
for all $l\in[1,L]$, where
\begin{align*}
    \hat{P}_{Y|X}(y|x) 
    = \frac{\sum_{\mathcal{D}_m} \mathbbm{1}_{\left\{ (x,y) = (x_i',y_i')\right\}} }{\sum_{\mathcal{D}_m}\mathbbm{1}_{\{x = x_i'\}}},\;\forall (x,y)\in \mathcal{X}\times\mathcal{Y}. 
\end{align*}
\item \textbf{Step 2:} Map $[\hat{x}^{(1)},\cdots,\hat{x}^{(L)}]$ to $\hat{w}$. If the decoded sequence does not appear in $\mathcal{C}_{M_0^L,n_0L}$, declare an error.
\end{itemize}
\textbf{Error Probability Analysis:} Let $\lambda_j = \mathbb{P}\{\mathrm{error}|X^{n_0L} = c_j\}$, where $c_j$ is the $j$-th codeword of $\mathcal{C}_{M_0^L,n_0L}$. The average probability, as a function of the codebook, is
\begin{align*}
    \epsilon(c_1,\cdots,c_{M_0^L}) = \frac{1}{M_0^L}\sum_{j = 1}^{M_0^L}\lambda_j
\end{align*}
for which, by symmetry, we get
\begin{align*}
    \epsilon^* \leq \epsilon = \mathbb{E}[\epsilon(C_1,\cdots,C_{M_0^L})] = \mathbb{E}[\lambda_1].
\end{align*}
We now average $\lambda_1$ over the random choice of codebook. Let the sequence \[X^{n_0L} = [X^{(1)},\cdots,X^{(l)},\cdots,X^{(L)}]\] be the transmitted codeword $C_1$ and \[\tilde{X}^{n_0L}= [\tilde{X}^{(1)},\cdots,\tilde{X}^{(l)},\cdots,\tilde{X}^{(L)}]\] be another codeword from the extended codebook $\mathcal{C}_{M_0^L,n_0L}$. Then, we can upper bound the error probability as 
\begin{align*}
    &\mathbb{E}[\lambda_1]\\
    &\leq  \mathbb{P}\left[\bigcup_{t = 2}^{M_0^L}\left\{\text{Decoder outputs $\tilde{X}^{n0L} = C_t$\;$,$\;$X^{n_0L} = C_1$}\right\}\right]\\
    &\leq \mathbb{E}\left[ \min\left\{1,\sum_{t = 2}^{M_0^L}\mathbb{P}\left[\text{Decoder outputs $\tilde{X}^{n_0L}\big|X^{n_0L}$,$Y^{n_0L}$}\right]\right\}\right],
    %\\
     %\leq &\mathbb{E}\left[(M_0^L - 1)\mathbb{P}\left[\text{Decoder outputs $\tilde{X}^{n_0L}$\bigg|$X^{n_0L}$,$Y^{n_0L}$}\right]\right],
\end{align*}
where the second inequality comes from applying the union bound and conditioning on $(X^{n_0L},Y^{n_0L})$. We use $\min\{1,x\}$ to exclude the values that exceed 1.  Next, let 
\begin{align*}\lambda^{(l)} = \mathbf{P}\left\{\text{$\hat{P}_{\mathrm{ML}}^{n_0}(\cdot,*)$ picks $\tilde{X}^{(l)}$\;\bigg| $X^{(l)}$ sent, $Y^{(l)}$ received} \right\}.\end{align*}
It should be noted that $\lambda^{(1)} = \cdots = \lambda^{(L)}$ since the mini codebooks are identically distributed. Then, we can bound the probability term as follows:
\begin{align*}
   \mathbb{P}\left[\text{Decoder outputs $\tilde{X}^{n_0L}\big|X^{n_0L}$,$Y^{n_0L}$}\right] =& 1 - (1-\lambda^{(1)})^L\\
   \leq & 1 - (1-L\lambda^{(1)})\\
   = & L\cdot \lambda^{(1)},
\end{align*}
where second line follows from the Bernoulli inequality. By also noting that the $M_0^L-1$ probability terms are identical, we get
\begin{align*}
     \mathbb{E}[\lambda_1]\leq & \mathbb{E}_{P_{Y|X}^{n_0}P_{X}^{n_0}}\left[\min\left\{1,(M_0^L-1)L\lambda^{(1)}\right\}\right].
\end{align*}
This is equivalent to \eqref{eq:derivation1} when $\lambda^{(1)}$ is expanded. We further relax equation \eqref{eq:derivation1} to get \eqref{eq:derivation2}. By Markov's Inequality, the probability term inside the expectation becomes
\begin{align*}
\mathbb{P}\left(\hat{P}_{Y|X}^{n_0}(y^{n_0} |\tilde{X}^{n_0}) \geq  \hat{P}_{Y|X}^{n_0}(y^{n_0}|x^{n_0})\right) & \leq \frac{\mathbb{E}\left[\hat{P}_{Y|X}^{n_0}(y^{n_0}|\tilde{X}^{n_0})\right]}{\hat{P}_{Y|X}^{n_0}\left(y^{n_0}|x^{n_0}\right)}\\
& = 2^{-i_{\hat{P}}\left(x^{n_0},y^{n_0}\right)}.
\end{align*}

To get equation \eqref{eq:derivation3}, we apply Proposition \ref{proposition:var_representation_TV} to change the measure of the expectation from the unknown DMC $P_{Y|X}$ to the empirical channel $\hat{P}_{Y|X}$. The second term in \eqref{eq:derivation3} can be bounded using the Bretagnolle-Huber identity and the tensorization property of the KL divergence. This gives us \eqref{eq:derivation4}. The proof is completed by applying Proposition \ref{proposition:KL_bound} and optimizing the integer parameter $n_0$.

\begin{figure*}[!t]
\begin{align}
    \mathbb{E}[\lambda_1]
    %\leq & \mathbb{P}\left[\bigcup_{t = 2}^{M_0^L}\left\{\bigcup_{l = 1}^{L}\left\{\hat{P}_{Y|X}^{n_0}\left(\tilde{X}^{(l)},Y^{(l)} \right) \geq  \hat{P}_{Y|X}^{n_0}\left(X^{(l)},Y^{(l)}\right)\right\}\right\}\right]\label{eq:derivation1}\\
    %\leq & \mathbb{E}_{P_{Y|X}^{n_0L}P_{X}^{n_0L}}\left[\mathbb{P}\left[\bigcup_{t = 2}^{M_0^L}\left\{\bigcup_{l = 1}^{L}\left\{\hat{P}_{Y|X}^{n_0}\left(\tilde{X}^{(l)},Y^{(l)} \right) \geq  \hat{P}_{Y|X}^{n_0}\left(X^{(l)},Y^{(l)}\right)\right\}\right\}\Bigg|X^{n_0L},Y^{n_0L}\right]\right]\label{eq:derivation2}\\
    %\leq & \mathbb{E}_{P_{Y|X}^{n_0}P_{X}^{n_0}}\left[\min\left\{1,\sum_{t = 2}^{M_0^L-1}\sum_{l = 1}^{L}\mathbb{P}\left\{\hat{P}_{Y|X}^{n_0}\left(Y^{(l)}|\tilde{X}^{(l)}\right) \geq \hat{P}_{Y|X}^{n_0}\left(Y^{(l)}|X^{(l)}\right)\bigg| X^{(l)},Y^{(l)}\right\}\right\}\right]\label{eq:derivation3}\\
    \leq & \mathbb{E}_{P_{Y|X}^{n_0}P_{X}^{n_0}}\left[\min\left\{1,(M_0^L-1)L\mathbb{P}\left\{\hat{P}_{Y|X}^{n_0}\left(Y^{n_0}|\tilde{X}^{n_0}\right) \geq \hat{P}_{Y|X}^{n_0}\left(Y^{n_0}|X^{n_0}\right)\bigg| X^{n_0},Y^{n_0}\right\}\right\}\right]\label{eq:derivation1}\\
    \leq & \mathbb{E}_{P_{Y|X}^{n_0}P_{X}^{n_0}}\left[\min\left\{1,(M_0^L-1)L\cdot2^{-i_{\hat{P}}(X^{n_0},Y^{n_0})}\right\}\right]\label{eq:derivation2}\\
    \leq & \mathbb{E}_{\hat{P}_{Y|X}^{n_0}P_{X}^{n_0}}\left[\min\left\{1,(M_0^L-1)L\cdot2^{-i_{\hat{P}}(X^{n_0},Y^{n_0})}\right\}\right] +\mathrm{TV}\left(\hat{P}_{Y|X}^{n_0}P_{X}^{n_0},P_{Y|X}^{n_0}P_{X}^{n_0}\right)\label{eq:derivation3}\\
     \leq & \mathbb{E}_{\hat{P}_{Y|X}^{n_0}P_{X}^{n_0}}\left[\min\left\{1,(M_0^L-1)L\cdot2^{-i_{\hat{P}}(X^{n_0},Y^{n_0})}\right\}\right] +\sqrt{1-e^{-n_0\mathrm{KL}\left(\hat{P}_{Y|X}P_{X}||P_{Y|X}P_{X}\right)}}\label{eq:derivation4}
\end{align}
\hrulefill
\end{figure*}

\section{Proof of Theorem \ref{theorem:metaconverse_learning}}\label{proof:converse}
Our starting point is the metaconverse bound \cite[Theorem 27]{Polyanskiy:2010}. Fix an arbitrary distribution \textcolor{black}{$Q_{Y^n}$}. Then, every $(2^{nR},n,\epsilon)$ code for a given channel $P_{Y|X}$ should satisfy
\begin{align}\label{eq:metaconverse}
    2^{nR} \leq \sup_{P_{X^n}}\frac{1}{\beta_{1-\epsilon}\left(P_{Y|X}^{n}P_{X^n},Q_{Y^n}P_{X^n}\right)}.
\end{align}
Since the channel $P_{Y|X}$ is not known in the problem setup, we tweak the metaconverse bound to change the dependence on the true channel $P_{Y|X}$ to the empirical distribution $\hat{P}_{Y|X}$.

Let $t = \mathrm{TV}\left(P_{Y|X}^n\textcolor{black}{P_{X^n}},\hat{P}_{Y|X}^n\textcolor{black}{P_{X^n}}\right)$. We need to show that the RHS of \eqref{eq:metaconverse_learning} is larger than the RHS of \eqref{eq:metaconverse}. From Definition \ref{definition:neyman_pearson_func}, we have
\textcolor{black}{
\begin{align*}
    &\beta_{1-\epsilon}\left(P_{Y|X}^nP_{X^n}, Q_{Y^n}P_{X^n}\right) \\
    &\qquad= \inf_{P_{Y|X}^nP_{X^n}[Z = 0]\geq 1 - \epsilon} Q_{Y^n}P_{X^n}[Z = 0]\\
    &\qquad\geq \inf_{\hat{P}_{Y|X}^nP_{X^n}[Z = 0]+t\geq 1 - \epsilon} Q_{Y^n}P_{X^n}[Z = 0]\\
    &\qquad= \beta_{1-\epsilon-t}\left(\hat{P}_{Y|X}^nP_{X^n}, Q_{Y^n}P_{X^n}\right),
\end{align*}
}
where the infimum is over all tests $P_{Z|X^n}$ that satisfy the constraint. The inequality in the second line comes from Proposition \ref{proposition:var_representation_TV}. The inequality holds since, for a fixed objective function, the solution to a minimization problem does not increase when the search space is widened. Next, we establish a high probability bound on $t$. \textcolor{black}{The following inequality holds:}
\begin{align*}
    t\leq& \sqrt{1-e^{-\mathrm{KL}(\hat{P}_{Y|X}^nP_{X^n}||P_{Y|X}^nP_{X^n})}}.
\end{align*}
This follows from the Bretagnolle-Huber inequality \cite{Bretagnolle:1978}. \textcolor{black}{For any arbitrary DMC $\hat{P}_{Y|X}$, we can limit $P_{X^n}$ to be a distribution of an exchangeable sequence \cite{Polyanskiy:2012}. Hence, all $P_{X_i}$'s are identically distributed but are not necessarily independent. Consequently, our bound for $t$ becomes
\begin{align*}
    t\leq& \sqrt{1-e^{-\mathrm{KL}(\prod_{i=1}^{n}\hat{P}_{Y_i|X_i}P_{X_i|X^{i-1}}||\prod_{i=1}^{n}P_{Y_i|X_i}P_{X_i|X^{i-1}})}}\\
    =&\sqrt{1-e^{-\sum_{i=1}^n\mathrm{KL}(\hat{P}_{Y_iX_i|X^{i-1}}||P_{Y_iX_i|X^{i-1}})}}.
    %\leq& \sqrt{1-\exp\left(-\sum_{i=1}^n\frac{\left((|\mathcal{X}^{(i)}||\mathcal{Y}|-1)\ln (m+1) - \ln \delta\right)}{m}\right)}\\
    %\leq& \sqrt{1-\exp\left(-\frac{n\left((|\mathcal{X}||\mathcal{Y}|-1)\ln (m+1) - \ln \delta\right)}{m}\right)}.
\end{align*}
The first line follows from $P_{X^n} = \prod_{i=1}^n P_{X_i|X^{i-1}}$ and rearranging the terms. The second line follows because
\begin{align*}
    &\mathrm{KL}(\prod_{i=1}^{n}\hat{P}_{Y_i|X_i}P_{X_i|X^{i-1}}||\prod_{i=1}^{n}P_{Y_i|X_i}P_{X_i|X^{i-1}})\\
    &=\mathrm{KL}(\prod_{i=1}^{n}\hat{P}_{Y_i|X_i}||\prod_{i=1}^{n}P_{Y_i|X_i}| P_{X^n})\\
    &=\mathbb{E}_{X^n\sim P_{X^n}}\left[\mathrm{KL}(\prod_{i=1}^{n}\hat{P}_{Y_i|X_i=x_i}||\prod_{i=1}^{n}P_{Y_i|X_i=x_i})\right]\\
    &=\sum_{i=1}^n\mathbb{E}_{X^n\sim P_{X^n}}\left[\mathrm{KL}(\hat{P}_{Y_i|X_i=x_i}||P_{Y_i|X_i=x_i})\right]\\
    &=\sum_{i=1}^n\mathrm{KL}(\hat{P}_{Y_i,X_i|X^{i-1}}||P_{Y_i,X_i|X^{i-1}}).
\end{align*}
Let $|\mathcal{X}_i\times \mathcal{Y}|$ be the number of possible values of $X_i\times Y$ when conditioned on the sequence $X^{i-1}$. Then, by Proposition \ref{proposition:KL_bound}, the bound 
\begin{align*}
     t\leq& \sqrt{1-\exp\left(-\sum_{i=1}^n\frac{\left((|\mathcal{X}_i \times \mathcal{Y}|-1)\ln (m+1) - \ln \delta\right)}{m}\right)}\\
     \leq& \sqrt{1-\exp\left(-\frac{n\left((|\mathcal{X}||\mathcal{Y}|-1)\ln (m+1) - \ln \delta\right)}{m}\right)} = \kappa
\end{align*}
holds with probability at least $1-\delta$. The second inequality follows because $P_{Y_i X_i}$ have the same number of possible values for all $i$ (due to exchangeability of the input and property of the DMC). This number is upper bounded by $|\mathcal{X}||\mathcal{Y}|$. Conditioning on the sequence $X^{i-1}$ will not increase this number, i.e. $|\mathcal{X}_i\times\mathcal{Y}|\leq |\mathcal{X}||\mathcal{Y}|$.} It is easy to verify that \eqref{eq:statistical_reliability} is satisfied. Note that the data-dependent bound is agnostic of the learning algorithm $\mathcal{A}$ used. Hence,  \eqref{eq:metaconverse_learning} holds for all learning algorithm whose output encoding and decoding functions satisfy \eqref{eq:statistical_reliability}. Noting that $\beta_{\alpha}(P,Q)$ is non-decreasing in $\alpha$ and that $\alpha\in [0,1]$ completes the proof.
\section{Proof of Theorem \ref{theorem:normal_approx_achievability}}\label{proof:normal_approx_achievability}
We start with the achievability part. Introduce a uniform random variable $U \sim \mathrm{Unif}(0,1)$. Then for any nonnegative $Z$, we have $\mathbb{E}[\min\{1,Z\}] = \mathbb{P}(Z \geq U)$. Note also that $i_{\hat{P}}(x^{n_0},y^{n_0}) = \sum_{i = 1}^{n_0}i_{\hat{P}}(x,y)$. Consequently, \eqref{eq:RCU_with_learning} becomes
\begingroup
\allowdisplaybreaks
\begin{align}\label{eq:RCU_with_learning2}
    \epsilon^*\leq& \min_{n_0\in\mathbb{Z}}\;  \Bigg\{\mathbb{P}\left[\sum_{i = 1}^{n_0}i_{\hat{P}}\left(X_i,Y_i\right) + \log U \leq \log{(L(\lceil2^{\frac{nR}{L}}\rceil^L-1))}  \right]\nonumber\\
    &+\sqrt{1-\exp\left(-\frac{n_0\left((|\mathcal{X}||\mathcal{Y}|-1)\ln (m+1) - \ln\delta\right)}{m}\right)}\Bigg\}\nonumber\\
    \leq& \mathbb{P}\left[\sum_{i = 1}^{n}i_{\hat{P}}\left(X_i,Y_i\right) + \log U \leq \log(2^{nR}-1)\right]\nonumber\\
    &+\sqrt{1-\exp\left(-\frac{n\left((|\mathcal{X}||\mathcal{Y}|-1)\ln (m+1) - \ln\delta\right)}{m}\right)}
\end{align}
\endgroup
The second inequality follows from setting $n_0 = n$. The RHS of \eqref{eq:RCU_with_learning2} should be less than $\epsilon$. Suppose we bound the first and second terms of the RHS of \eqref{eq:RCU_with_learning2} by $\epsilon-\frac{1}{\sqrt{n}}$ and $\frac{1}{\sqrt{n}}$, respectively. The bound for the second term tells us that the blocklength $n$ should satisfy
\begin{align*}
    \sqrt{1-\exp\left(-\frac{n\left((|\mathcal{X}||\mathcal{Y}|-1)\ln (m+1) - \ln\delta\right)}{m}\right)}\leq \frac{1}{\sqrt{n}},
\end{align*}
which, after some algebraic manipulation, gives us
\begin{align*}
n\leq&\frac{m\cdot \ln\left(\frac{n}{n-1}\right)}{(|\mathcal{X}||\mathcal{Y}|-1)\ln (m+1) - \ln\delta}.
\end{align*}
Since $\frac{1}{n}\leq \ln\left(\frac{n}{n-1}\right)$ for all $n > 1$, the above inequality will hold if $n$ satisfies condition \eqref{eq:blocklength_condition}.

%In addition, we want $\epsilon - \frac{1}{n^r}$ to be greater than 0, i.e. $\epsilon^{-\frac{1}{r}} < n$, so that we can bound the first term of the RHS of \eqref{eq:RCU_with_learning2} with a positive value.
%}

We now apply Theorem \ref{theorem:berry_esseen} to bound the first term of \eqref{eq:RCU_with_learning2}. Let $T = \mathbb{E}\left[|i_{\hat{P}}(X_1,Y_1) - C_{\hat{P}}|^3\right]$, $\tilde{c} = \mathbb{E}[\log_2 U]$, $\tilde{v} = \mathrm{Var}[\log_2 U]$, and $\tilde{t} = \mathbb{E}\left[|\log_2 U - \tilde{c}|^3\right]$. Then, we get
\begingroup
\allowdisplaybreaks
\begin{align}\label{eq:apply_BE}
    &\mathbb{P}\left[\sum_{i = 1}^{n}i_{\hat{P}}\left(X_i,Y_i\right) + \log U \leq \log_2{(2^{nR}-1)} \right]\nonumber \\
    &\qquad\leq \mathcal{Q}\left(\frac{C_{\hat{P}} - R + \frac{\tilde{c}}{n}}{\sqrt{\frac{V_{\hat{P}}^{\epsilon}+\frac{\tilde{v}}{n}}{n}}}\right) + \frac{B(n)}{\sqrt{n}}\;\leq \epsilon-\frac{1}{\sqrt{n}}.
\end{align}
\endgroup
where $B(n) = 6\left(T+\frac{\tilde{t}}{n}\right)/\left(V_{\hat{P}}+\frac{\tilde{v}}{n}\right)^{\frac{3}{2}}$. Suppose $n$ is sufficiently large such that $\epsilon - \frac{ 1+B(n)}{\sqrt{n}} \in (0,1)$. After some algebraic manipulation, we get
\begin{align}
    R =& C_{\hat{P}} + \frac{\tilde{c}}{n} - \sqrt{\frac{V_{\hat{P}}^{\epsilon} + \frac{\tilde{v}}{n}}{n}}\mathcal{Q}^{-1}\left(\epsilon - \frac{1+B(n)}{\sqrt{n}}\right)\nonumber\\
    =& C_{\hat{P}} - \sqrt{\frac{V_{\hat{P}}^{\epsilon}}{n}}\mathcal{Q}^{-1}\left(\epsilon\right) + \mathcal{O}\left(\frac{1}{n}\right).
\end{align}
The second line follows from the Taylor expansion of $Q^{-1}(\cdot)$. All the terms of order $\frac{1}{n}$ are gathered in $\mathcal{O}\left(\frac{1}{n}\right)$.

For the converse part, we note that
\begin{align*}
    R \leq -\frac{1}{n}\log \left\{\textcolor{black}{\inf_{P_{X^n}}}\beta_{\max\{0,1-\epsilon-\kappa\}}\left(\hat{P}_{Y|X}^{n}\textcolor{black}{\hat{P}_{X^n}},\textcolor{black}{\hat{P}_{Y^n}\hat{P}_{X^n}}\right)\right\}
\end{align*}
by \eqref{eq:metaconverse_learning}. Here, we let \textcolor{black}{$Q_{Y^n} = \hat{P}_{Y^n} = \int \hat{P}_{Y|X}^n d\hat{P}_{X^n}$.} By \eqref{eq:blocklength_condition} and the definition of $\kappa$, we have $\kappa \leq \frac{1}{\sqrt{n}}$. Suppose $n$ is sufficiently large such that $\epsilon+\frac{B(n)+2}{\sqrt{n}}\in(0,1)$. By applying \cite[Lemma 58, Equation 344]{Polyanskiy:2010}, we get
%Here, $\hat{P}_{X^n}^*$ is the \textcolor{black}{optimal} input distribution for the empirical DMC which also solves \eqref{eq:channel_dispersion}. We let $Q_{Y} = \hat{P}_{Y}^* = \int \hat{P}_{Y|X} d\hat{P}_{X}^*$ to be the capacity-achieving output distribution for the empirical channel law. 
\begingroup
\allowdisplaybreaks
\begin{align}
    R \leq& C_{\hat{P}} + \sqrt{\frac{V_{\hat{P}}^{\epsilon}}{n}}\mathcal{Q}^{-1}\left(1-\epsilon-\kappa - \frac{B(n)+1}{\sqrt{n}}\right) + \frac{\log n}{2n}\nonumber\\
    \leq& C_{\hat{P}} + \sqrt{\frac{V_{\hat{P}}^{\epsilon}}{n}}\mathcal{Q}^{-1}\left(1-\epsilon-\frac{B(n)+2}{\sqrt{n}}\right) + \frac{\log n}{2n}\nonumber\\
    =& C_{\hat{P}} + \sqrt{\frac{V_{\hat{P}}^{\epsilon}}{n}}\mathcal{Q}^{-1}\left(1-\epsilon\right) + \mathcal{O}\left(\frac{\log n}{n}\right)\nonumber\\
    =& C_{\hat{P}} -\sqrt{\frac{V_{\hat{P}}^{\epsilon}}{n}}\mathcal{Q}^{-1}\left(\epsilon\right) + \mathcal{O}\left(\frac{\log n}{n}\right)
\end{align}
\endgroup
The second line follows from the fact that $\mathcal{Q}^{-1}(x)$ is a decreasing function of $x$. Hence, an upper bound is obtained by replacing $\kappa$ with $\frac{1}{\sqrt{n}}$. The third line follows from the Taylor expansion of $Q^{-1}(\cdot)$. All the terms of order $\frac{\log n}{n}$ and $\frac{1}{n}$ are gathered in $\mathcal{O}\left(\frac{\log n}{n}\right)$. The fourth line follows from $\mathcal{Q}^{-1}(1-x) = -\mathcal{Q}^{-1}(x)$. The proof is completed by combining the achievability and converse results. We note that $\mathcal{O}\left(\frac{1}{n}\right)$ is also $\mathcal{O}\left(\frac{\log n}{n}\right)$.

\section{Proof of Corollary \ref{corollary:normal_approx_general}}\label{proof:normal_approx_general}

The proof closely follows that of Theorem \ref{theorem:normal_approx_achievability}. We modify \eqref{eq:RCU_with_learning2} to fit in our setup:
\begin{align*}
\epsilon^*\leq& \mathbb{P}\left[\sum_{i = 1}^{n_0}i_{\hat{P}}\left(X_i,Y_i\right) + \log U \leq \log{(L(\lceil2^{\frac{nR}{L}}\rceil^L-1))} \right]\nonumber\\
    &+\sqrt{1-\exp\left(-\frac{n_0\left((|\mathcal{X}||\mathcal{Y}|-1)\ln (m+1) - \ln\delta\right)}{m}\right)}\nonumber\\
    %\leq& \mathbb{P}\left[\sum_{i = 1}^{n_0}i_{\hat{P}}\left(X_i,Y_i\right) + \log U \leq \log{(L(\lceil2^{\frac{nR}{L}}\rceil^L))}\right] + \alpha_0\epsilon\\
    %\leq& \mathbb{P}\left[\sum_{i = 1}^{n_0}i_{\hat{P}}\left(X_i,Y_i\right) + \log U \leq \log{L((2^{\frac{nR}{L}}+1)^L)}\right] + \alpha_0\epsilon\\
    %\leq& \mathbb{P}\left[\sum_{i = 1}^{n_0}i_{\hat{P}}\left(X_i,Y_i\right) + \log U \leq \log{L} + nR + \frac{L}{2^{\frac{nR}{L}}}\right] + \alpha_0\epsilon
\end{align*}
Suppose we bound the first and second terms of the RHS by $\epsilon-\frac{1}{\sqrt{n_0}}$ and $\frac{1}{\sqrt{n_0}}$, respectively. Applying similar analysis as in Appendix \ref{proof:normal_approx_achievability}, we arrive at the conclusion that $n_0$ should satisfy condition \eqref{eq:blocklength_condition2}.

Applying the Berry-Esseen CLT, we get
\begin{align*}
     &\mathbb{P}\left[\sum_{i = 1}^{n_0}i_{\hat{P}}\left(X_i,Y_i\right) + \log U \leq \log{L} + nR + \frac{L}{2^{\frac{nR}{L}}} \right]\nonumber \\
    &\leq \mathcal{Q}\left(\frac{\frac{n_0C_{\hat{P}}}{n} - R - \frac{\log L}{n} - \frac{L\cdot 2^{-\frac{nR}{L}}}{n}+ \frac{\tilde{c}}{n}}{\sqrt{\frac{n_0V_{\hat{P}}^{(1-\alpha)\epsilon}+\tilde{v}}{n^2}}}\right) + \frac{B(n_0)}{\sqrt{n_0}}\\
    &\leq \mathcal{Q}\left(\frac{\frac{n_0C_{\hat{P}}}{n} - R - \frac{\log L}{n} - \frac{L}{n}+ \frac{\tilde{c}}{n}}{\sqrt{\frac{n_0V_{\hat{P}}^{(1-\alpha)\epsilon}+\tilde{v}}{n^2}}}\right) + \frac{B(n_0)}{\sqrt{n_0}}\\
    &\leq \epsilon - \frac{1}{\sqrt{n_0}},
    \end{align*}
where the other variables are already defined in Appendix \ref{proof:normal_approx_achievability}. The second inequality is obtained by noting that Q function is monotonic decreasing. Hence, an upper bound is obtained using $\frac{L\cdot 2^{-\frac{nR}{L}}}{n}\leq \frac{L}{n}$. Suppose $n_0$ is sufficiently large such that $\epsilon - \frac{B(n_0)+1}{\sqrt{n_0}}$ is between 0 to 1. After performing some algebraic manipulation similar to Appendix \ref{proof:normal_approx_achievability}, we get 
\begin{align*}
    R_{\mathrm{achievable}}^* =& \frac{n_0 C_{\hat{P}}}{n} - \sqrt{\frac{n_0V_{\hat{P}}^{\epsilon}}{n^2}}\mathcal{Q}^{-1}\left(\epsilon\right)\\
    &\quad-\frac{L + \log L}{n} + \mathcal{O}\left(\frac{1}{n}\right)\\
    =& \frac{n_0 C_{\hat{P}}}{n} - \sqrt{\frac{n_0V_{\hat{P}}^{\epsilon}}{n^2}}\mathcal{Q}^{-1}\left(\epsilon\right) + \mathcal{O}\left(\frac{1}{n}\right),
\end{align*}
where the $-\frac{L+\log L}{n}$ term is absorbed by $\mathcal{O}(\frac{1}{n})$ in the second line.

%the desired result.
\end{appendices}

\end{document}